\let\oldvec\vec% Store \vec in \oldvec
\let\vec\oldvec% Restore \vec from \oldvec
\newcolumntype{C}{>{\centering\arraybackslash}p{1.5cm}}
\newcommand{\myosd}{\textsc{On\-li\-ne Se\-ri\-al Di\-c\-ta\-to\-r}}
\newcommand{\myorp}{\textsc{On\-li\-ne R\-an\-d\-om Pr\-io\-ri\-ty}}
\newcommand{\mymax}{\textsc{Ma\-x\-im\-um Li\-ke}}
\newcommand{\mypar}{\textsc{Pa\-re\-to Li\-ke}}
\newcommand{\mylike}{\textsc{Li\-ke}}
\newcommand{\myblike}{\textsc{Ba\-la\-n\-ced Li\-ke}}
\begin{document}

\title{Strategy-proofness, Envy-freeness and Pareto efficiency \\ in Online Fair Division with Additive Utilities\thanks{Funded by the European Research Council under the Horizon 2020 Programme via AMPLify 670077.}}
\titlerunning{Online Fair Division with Additive Utilities}
\author{Martin Aleksandrov \and Toby Walsh}
\authorrunning{M. Aleksandrov \and T. Walsh}
\institute{Technical University Berlin, Germany \email{\{martin.aleksandrov,toby.walsh\}@tu-berlin.de}}

\maketitle

\begin{abstract}
We consider fair division problems where indivisible items arrive one by one in an online fashion and are allocated immediately to agents who have additive utilities over these items. Many existing offline mechanisms do not work in this online setting. In addition, many existing axiomatic results often do not transfer from the offline to the online setting. For this reason, we propose here three \emph{new} online mechanisms, as well as consider the axiomatic properties of three previously proposed online mechanisms. In this paper, we use these mechanisms and characterize classes of online mechanisms that are strategy-proof, and return envy-free and Pareto efficient allocations, as well as combinations of these properties. Finally, we identify an important impossibility result.

\keywords{Online Fair Division \and Strategy-Proofness \and Envy-Freeness \and Pareto Efficiency \and Additive Utilities}
\end{abstract}

\section{Introduction}\label{sec:intro}

Fair division is an important problem facing our society today as increasing economical, environmental, and other pressures require us to try to do more with limited resources. An especially challenging form of fair division is when we are allocating available resources in an \emph{online} fashion with only partial knowledge of the future resources and agent's preferences for these resources. There are many applications of online fair division for \emph{social good}. For example, when a kidney is donated, it must be allocated to a patient within a few hours. As a second example, food items arrive at a food bank and must be allocated and distributed to charities promptly. As a third example, when allocating charging slots to electric cars, we may not know when or where cars will arrive for charging. As a fourth example, when managing a river, we might start allocating irrigation water to farmers today, not knowing how much it will rain the next month. As a fifth example, when allocating memory to cloud services, we may not know what and how many services are requested in the next moment.

The online nature of such fair division problems changes the mechanisms available to allocate items. For example, with the well-known (offline) \emph{sequential allocation} mechanism, agents pick their most preferred remaining items in turns. In an online setting, an agent's most preferred item may not be currently (or even ever) available. To tackle this, we propose three \emph{new} - \myosd, \myorp\ and \mypar\ - as well as study three existing - \mylike, \myblike\ and \mymax- online mechanisms. The online nature also means we may need to consider \emph{new} axiomatic properties. For example, in deciding if agents have any incentive to misreport preferences in an online setting, we may consider the past fixed but the future unknown. This leads to a \emph{new} and weaker form of \emph{online strategy-proofness} (OSP). Therefore, it might be easier to achieve strategy-proofness in an online than in an offline setting. Also, we give a \emph{new} and stronger form of envy-freeness, called \emph{shared envy-freeness} (SEF), in which agents might be envious of each other but only over the items that they like in common. For example, in the paper assignment problem, reviewers tend to bid for papers in their field of expertise and not for papers outside this field \cite{lian2018}. In this context, SEF guarantees envy-freeness across the different fields.  

We provide characterization results for strategy-proofness (SP), envy-freeness (EF) and Pareto efficiency (PE). For example, we characterize completely the class of online mechanisms that are SP, and the class of online mechanisms that are PE ex post. We also characterize the class of SP and EF mechanisms. Thus, a mechanism for online fair division is SP and EF ex ante iff it returns the same random assignment as \mylike. The same holds for SEF ex ante mechanisms. Also, we prove that a mechanism is SP, PE ex post and EF ex ante iff it returns the same probability distribution of allocations as \myorp. We further give an important impossibility result. In offline fair division, stochastic Pareto efficiency and envy-freeness are always possible simultaneously (e.g.\ the probabilistic serial mechanism \cite{bogomolnaia2001}). However, we prove that no online mechanism can be both Pareto efficient ex ante and envy-free ex ante.

\section{Related Work}\label{sec:rel}

We consider the model of online fair division from \cite{walsh2014} in which items are indivisible and arrive one-by-one over time. We primarily contrast our characterization results with similar results in (offline) fair division. For example, we prove that \emph{no} online mechanism can be both PE and EF ex ante. By comparison, the (offline) probabilistic serial mechanism satisfies both stochastic PE and EF \cite{bogomolnaia2001}. In fact, it follows from our results that there could be an unbounded number of mechanisms that are just PE ex ante or EF ex ante. We can show that other (offline) characterizations (e.g.\ \cite{brams2005,manea2007}) break in the online setting as well. By comparison, as online mechanisms can be applied to offline problems by picking a sequence of the items, our results can be mapped into such settings. For example, our \mypar\ mechanism returns all possible PE ex post allocations in the offline problem. As a result, this mechanism characterizes the set of offline such mechanisms. As another example, we prove that \myorp\ is SP and PE ex post, but not PE ex ante. With this mechanism, agents with the same cardinal utilities receive the same expected utilities (i.e.\ it is symmetric). This is in-line with the impossibility result that \emph{no} (offline or online) mechanism for offline matching is SP, PE ex ante and symmetric \cite{zhou1990}. Yet more related results are shown in many other fair division (e.g.\ \cite{chevaleyre2008,freeman2018,kash2014,walsh2011}), voting (e.g.\ \cite{gibbard1973,xia2010}) and kidney exchange (e.g.\ \cite{dickerson2012,dickerson2015}) settings. Our results can also be mapped to such settings.

\section{Online and Additive Fair Division}\label{sec:model}

An online fair division \emph{instance} consists of a set of \emph{agents} $N=\lbrace 1,\ldots,n\rbrace$, and an ordered set of indivisible \emph{items} $O=\lbrace o_1,\ldots,o_m\rbrace$. We suppose that item $o_j$ arrives at round $j$ when each agent $i\in N$ becomes aware of their sincere \emph{utility} $u_{ij}\in\mathbb{R}_{\geq 0}$ and places a 
possibly strategic \emph{bid} $v_{ij}\in\mathbb{R}_{\geq 0}$ for $o_j$. We suppose at least one agent has positive utility for every item as, otherwise, we can simply discard the item. We use \emph{online} mechanisms that allocate $o_j$ immediately, supposing the allocation of $o_1$ to $o_{j-1}$ is fixed and there is \emph{no information} of $o_{j+1}$ to $o_m$. We consider only \emph{non-wasteful} mechanisms that share the probability of 1 for $o_j$ only among agents that bid positively for it if there is at least one such agent and, otherwise, discard $o_j$. 

An \emph{allocation} $\pi_j$ of $o_1$ to $o_j$ gives a bundle of items $\pi_{ji}$ to each agent $i\in N$ such that $\bigcup_{i \in N} \pi_{ji} =\lbrace o_1,\ldots,o_j\rbrace$ and $\pi_{ji}\cap\pi_{jk}=\emptyset$ for each $i \neq k$. We write $u_{ik}(\pi_j)$ for the \emph{utility} of agent $i\in N$ for $\pi_{jk}$. We write $u_i(\pi_j)$ for $u_{ii}(\pi_j)$. A mechanism induces a probability distribution over the set $\Pi_j$ of all allocations of items $o_1$ to $o_j$. We write $\overline{u}_{ik}(\Pi_j)$ for the \emph{expected utility} of agent $i\in N$ for the expected allocation of agent $k\in N$ and $p_{ik}(\Pi_j)$ for the \emph{probability} of agent $i\in N$ for item $o_k$ in this distribution. We write $\overline{u}_i(\Pi_j)$ for $\overline{u}_{ii}(\Pi_j)$ and $p_i(\Pi_j)$ for $p_{ij}(\Pi_j)$. We suppose \emph{additive} utilities and expected utilities. 

\begin{center}
$u_{ik}(\pi_j)=\displaystyle\sum_{o_h\in\pi_{jk}} u_{ih}$ \hspace{1cm} $\overline{u}_{ik}(\Pi_j)=\displaystyle \sum_{h=1}^j p_{kh}(\Pi_j)\cdot u_{ih}$
\end{center}

We consider three common properties of mechanisms: strategy-proofness, envy-freeness and Pareto efficiency.

\begin{definition} (SP)
A mechanism is \emph{strategy-proof (SP)} if, for each instance with $m\in\mathbb{N}$ items, no agent $i\in N$ can strictly increase $\overline{u}_{i}(\Pi_m)$ by reporting any sequence $v_{i1},\ldots,v_{im}$ other than $u_{i1},\ldots,u_{im}$, supposing all other agents bid sincerely for items $o_1$ to $o_m$.
\end{definition}

\begin{definition} (EF)
A mechanism is \emph{envy-free ex post (EFP)} iff, for each instance with $m\in\mathbb{N}$ items and allocation $\pi_m\in \Pi_m$ returned by the mechanism with positive probability, $\forall i,k\in N:u_{ii}(\pi_m)\geq u_{ik}(\pi_m)$. A mechanism is {\em envy-free ex ante (EFA)} iff, for each instance with $m\in\mathbb{N}$ items, $\forall i,k\in N:\overline{u}_{ii}(\Pi_m)\geq \overline{u}_{ik}(\Pi_m)$.
\end{definition}

\begin{definition} (PE)
A mechanism is \emph{Pareto efficient ex post (PEP)} iff, for each instance with $m\in\mathbb{N}$ items and allocation $\pi_m\in \Pi_m$ returned by the mechanism with positive probability, no $\pi^{\prime}_m\in\Pi_m$ is such that $\forall i\in N:u_i(\pi^{\prime}_m)\geq u_i(\pi_m)$ and $\exists k\in N:u_k(\pi^{\prime}_m)>u_k(\pi_m)$. Also, it is \emph{Pareto efficient ex ante (PEA)} iff, no mechanism gives at least $\overline{u}_{i}(\Pi_m)$ to each $i\in N$ and more than $\overline{u}_{k}(\Pi_m)$ to some $k\in N$.
\end{definition}

To characterize SP, EF and PE mechanisms, we will use two equivalence relations between outcomes of mechanisms.  We say that two mechanisms are \emph{ex ante equivalent} iff, for each instance of $m\in\mathbb{N}$ items, agent $i\in N$ and item $o_j\in O$, the probabilities of $i$ for $o_j$ under both mechanisms are equal, whilst these mechanisms are \emph{ex post equivalent} iff, for each instance of $m\in\mathbb{N}$ items and allocation $\pi_m\in\Pi_m$, the probabilities of $\pi_m$ under both mechanisms are equal (i.e.\  each of the two mechanisms returns an identical distribution of allocations). 

\section{Six Cardinal Mechanisms}\label{sec:mec}

Many offline mechanisms cannot be used in the online setting because only one item is available at any time. For this reason, we propose three \emph{new} as well as study three existing online mechanisms. For every arriving item $o_j$, each mechanism first computes a set of agents feasible for $o_j$ given an allocation $\pi_{j-1}\in\Pi_{j-1}$. An agent that is feasible for $o_j$ then receives it with \emph{conditional probability} that is uniform with respect to the other agents that are feasible for $o_j$. Thus, for the first $j$ items, each mechanism returns a probability distribution over $\Pi_j$ and an actual allocation with some positive probability that is obtained as a product of $j$ conditional randomizations.

\begin{itemize}
\item \myosd: it has a strict priority order $\sigma$ of the agents prior to round one, and the unique feasible agent for $o_j$ is the first agent in $\sigma$ that bids positively for $o_j$. 

\item \myorp: it draws uniformly at random a strict priority order $\sigma$ of the agents prior to round one, and runs \myosd\ with it. 
  
\item \mypar: agent $i\in N$ is feasible for $o_j$ if extending $\pi_{j-1}$ by allocating $o_j$ to $i$ is Pareto efficient ex post. 

\item \mylike: agent $i\in N$ is feasible for $o_j$ if $v_{ij}>0$ \cite{aleksandrov2015ijcai}. 

\item \myblike: agent $i\in N$ is feasible for $o_j$ if $v_{ij}>0$ and $i$ has the fewest items in $\pi_{j-1}$ among those with positive bids for $o_j$ \cite{aleksandrov2015ijcai}.

\item \mymax: agent $i\in N$ is feasible for $o_j$ if $v_{ij}=\max_{k\in N} v_{kj}$ \cite{aleksandrov2017mcm}. 
\end{itemize}

In Example~\ref{exp:one}, we demonstrate that these mechanisms may return distributions of allocations that are different from each other.

\begin{example}\label{exp:one}
Let us consider an instance with $N=\lbrace 1,2\rbrace$ and $O=\lbrace o_1,o_2\rbrace$. The utilities of agents for items are given in the below table.

\begin{center}
\begin{tabular}{|c|cc|} \hline
& item $o_1$ & item $o_2$ \\ \hline
agent 1 & \num{1} & \num{2}  \\
agent 2 & \num{2} & \num{1} \\ \hline
\end{tabular}
\end{center}

In this instance, supposing sincere bidding, there are 4 possible allocations: $\pi^1=(\lbrace o_1,o_2\rbrace,\emptyset)$, $\pi^2=(\emptyset,\lbrace o_1,o_2\rbrace)$, $\pi^3=(\lbrace o_1\rbrace,\lbrace o_2\rbrace)$, and $\pi^4=(\lbrace o_2\rbrace,\lbrace o_1\rbrace)$. \myosd\ with fixed $\sigma=(1,2)$ returns $\pi^1$ with probability $1$, \myorp\ returns $\pi^1$ and $\pi^2$ with probabilities $1/2$, \mypar\ returns $\pi^1$ with probability $1/2$, $\pi^2$ and $\pi^4$ with probabilities $1/4$, \mylike\ returns $\pi^1$ to $\pi^4$ with probabilities $1/4$, \myblike\ returns $\pi^3$ and $\pi^4$ with probabilities $1/2$, and \mymax\ returns $\pi^4$ with probability $1$.\qed
\end{example}

We note that the \myosd\ mechanism is similar to the (offline) \emph{serial dictatorship} mechanism \cite{svensson1999}. However, agents have no quota on the number of items they receive with \myosd, and only take items for which they declare non-zero utility. The \myorp\ mechanism is also similar to the (offline) \emph{random priority} mechanism \cite{abdulkadiroglu1998}. Finally, the \mylike\ mechanism can be seen as the online analog of the (offline) \emph{probabilistic serial} mechanism (see \cite{bogomolnaia2001}) with agents ``eating'' each next item which they like.

\section{Strategy-Proofness}\label{sec:sp}

We begin by considering strategic behavior of agents. We provide a simple characterization of mechanisms that are strategy-proof. For $i\in N$, we say that $p_i(\Pi_j)$ is a \emph{step} function iff it is 0 if $v_{ij}=0$ and it admits the same value for any bid $v_{ij}>0$ supposing the bids of the other agents for $o_1$ to $o_j$, and the bids of agent $i$ for $o_1$ to $o_{j-1}$ are fixed. A mechanism is a \emph{step} mechanism iff, for each instance with $m\in\mathbb{N}$ items, $i\in N$ and $o_j\in O$, $p_i(\Pi_j)$ is a step function. For $i\in N$, we say that $p_i(\Pi_j)$ is a \emph{memoryless} function iff it takes the same value for all possible bids $v_{i1}$ to $v_{i(j-1)}$ of agent $i$ for items $o_1$ to $o_{j-1}$ given fixed bid $v_{ij}$ of agent $i$ for item $o_j$ and fixed bids of the other agents for items $o_1$ to $o_j$. A mechanism is a {\em memoryless} mechanism iff, for each instance with $m\in\mathbb{N}$ items, $i\in N$ and $o_j\in O$, $p_i(\Pi_j)$ is a memoryless function. 

With a step mechanism, $p_i(\Pi_j)$ does not depend on the size of an agent's non-zero bid for item $o_j$ but it may depend on the allocation history. By comparison, with a memoryless mechanism, $p_i(\Pi_j)$ may depend on the size of their non-zero bid for item $o_j$ but not on the allocation history. As a consequence, with a memoryless step mechanism, $p_i(\Pi_j)$ depends only on the combination of the non-zero bids for item $o_j$.

\begin{theorem}\label{thm:one}
A non-wasteful mechanism for online fair division is strategy-proof iff it is
a memoryless step mechanism.
\end{theorem}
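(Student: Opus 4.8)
The plan is to exploit the \emph{online} structure to collapse strategy-proofness into a per-item optimisation, and then read off both directions. The key reduction I would record first: since $o_j$ is allocated irrevocably at round $j$ using only the bids for $o_1,\ldots,o_j$, the marginal probability that $i$ obtains $o_j$ is already fixed at round $j$, so $p_{ij}(\Pi_m)=p_i(\Pi_j)$ and, by additivity, $\overline{u}_i(\Pi_m)=\sum_{h=1}^m p_i(\Pi_h)\,u_{ih}$, where $p_i(\Pi_h)$ depends only on $i$'s bids $v_{i1},\ldots,v_{ih}$ together with the others' bids. I fix the latter throughout by realising them as the sincere utilities of the other agents. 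Everything then turns on how this sum reacts to a change in a single coordinate $v_{ih}$.

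For the \emph{if} direction I would argue that each $v_{ij}$ influences only $p_i(\Pi_j)$: the earlier terms $p_i(\Pi_h)$ with $h<j$ ignore it by the online property, and the later terms $p_i(\Pi_h)$ with $h>j$ ignore it by memorylessness (which makes $p_i(\Pi_h)$ independent of $i$'s bids on $o_1,\ldots,o_{h-1}$, among them $o_j$). By the step property $p_i(\Pi_j)$ is $0$ for a zero bid and a fixed constant $c_j\ge 0$ for any positive bid, so $\overline{u}_i$ separates into independent per-item contributions $g_j$ equal to $c_j u_{ij}$ when $v_{ij}>0$ and to $0$ when $v_{ij}=0$. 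Since $c_j,u_{ij}\ge 0$, each $g_j$ is maximised by bidding positively exactly when $u_{ij}>0$, which is precisely what sincere bidding does; hence sincerity maximises $\overline{u}_i$ and the mechanism is SP.

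For the \emph{only if} direction I treat the two properties in turn. The step property's zero case is immediate from non-wastefulness. For its positive case I fix all bids except $v_{ij}$ and use the instance with exactly $m=j$ items, taking $i$'s sincere utilities equal to the fixed data with $u_{ij}=a>0$; because $o_j$ is last, changing only $v_{ij}$ leaves every $p_i(\Pi_h)$ with $h<j$ untouched, so comparing the sincere report $a$ with a deviation to $b>0$ gives $p_i(\Pi_j;a)\,a\ge p_i(\Pi_j;b)\,a$, i.e.\ $p_i(\Pi_j;a)\ge p_i(\Pi_j;b)$, and swapping the roles of $a$ and $b$ forces equality.

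The crux is memorylessness, which I would prove by induction on $j$, first for two past profiles $A,B$ differing in a single coordinate $h<j$ and then chaining single-coordinate changes. Writing out the SP inequalities in both directions and invoking the inductive hypothesis $p_i(\Pi_k;A)=p_i(\Pi_k;B)$ for all $h<k<j$, every term cancels except the one for $o_h$ and the target one for $o_j$. A pure magnitude change at $h$ (positive to positive) cancels the $o_h$ term by the step property, and the two inequalities then pin $p_i(\Pi_j;A)=p_i(\Pi_j;B)$, giving magnitude-independence at $o_j$. The real obstacle is a \emph{sign} change at $h$ (positive versus zero): the $o_h$ term does not vanish and SP yields only the one-sided bound $p_i(\Pi_j;\text{positive})\le p_i(\Pi_j;\text{zero})$. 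To recover the reverse inequality I would let the sincere value $u_{ih}=a$ tend to $0^+$; by the step property and the magnitude-independence just shown, $p_i(\Pi_h)$ and $p_i(\Pi_j)$ stay constant as $a$ ranges over the positives, so the SP inequality $p_i(\Pi_h)\,a+\big(p_i(\Pi_j;\text{positive})-p_i(\Pi_j;\text{zero})\big)c\ge 0$ holds for every $a>0$, and sending $a\to 0^+$ gives $p_i(\Pi_j;\text{positive})\ge p_i(\Pi_j;\text{zero})$, hence equality. This limiting step — trading an arbitrarily small but positive sincere value for the allocation it buys — is where the argument really bites and the part I expect to demand the most care.
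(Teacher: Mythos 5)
Your proof is correct, and while it shares the overall skeleton of the paper's argument (reduce expected utility to the additive per-item form $\sum_h p_i(\Pi_h)u_{ih}$ via the online structure; prove the ``if'' direction by separating per-item contributions; prove the step property by making $o_j$ the last item and playing the two SP inequalities for positive bids $a,b$ against each other), your treatment of memorylessness is genuinely different from the paper's and, in fact, more complete. The paper argues by contradiction: it takes a witnessing pair of bid histories, truncates to the \emph{first} round at which the induced probabilities disagree (so that all earlier terms cancel), orients the inequality by swapping the roles of the two histories, and assigns utilities ($1$ on earlier items, $j$ on $o_j$) so the deviation is profitable. This sketch silently passes over exactly the case you isolated as the crux: when the first disagreement is caused by a \emph{sign} change in a past bid (positive versus zero), the $o_h$ term does not cancel and SP gives only the one-sided bound $p_i(\Pi_j;\text{positive})\le p_i(\Pi_j;\text{zero})$. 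Your single-coordinate induction handles this head-on, and your limiting argument is sound: since the step property and the already-established positive-to-positive invariance make $p_i(\Pi_h)$ and $p_i(\Pi_j)$ constant as the sincere value $a$ ranges over $(0,\infty)$ (indeed, boundedness $p_i(\Pi_h)\le 1$ alone would do), the SP inequality $p_i(\Pi_h)\,a+\bigl(p_i(\Pi_j;\text{positive})-p_i(\Pi_j;\text{zero})\bigr)c\ge 0$ has constant coefficients, and taking $a\to 0^+$ legitimately yields the reverse bound. What the paper's route buys is brevity; what yours buys is a gap-free derivation of memorylessness (the chaining of single-coordinate changes, with the sign-change case resolved explicitly) that the published proof only gestures at.
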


\begin{proof}
Pick $i\in N$ in an instance. Let us view $\overline{u}_{i}(\Pi_j)$ and $p_{i}(\Pi_j)$ as functions of $v_{i1}$ to $v_{ij}$. That is, we write $\overline{u}_{i}(\Pi_j)=\overline{u}_{i}(v_{i1},\ldots,v_{ij})$ and $p_i(\Pi_j)=p_i(v_{i1},\ldots,v_{ij})$. Consider a memoryless step mechanism. Suppose now that all agents bid sincerely. Then, $\overline{u}_i(u_{i1},\ldots, u_{im})=\sum_{j=1}^m p_i(u_{i1},\ldots, u_{ij})\cdot u_{ij}$. Suppose next that only $i$ bids strategically $v_{i1}$ to $v_{im}$. Then, $\overline{u}_i(v_{i1},\ldots, v_{im})=\sum_{j=1}^m p_i(v_{i1},\ldots, v_{ij})\cdot u_{ij}$. For each $o_j$ with $v_{ij}=u_{ij}$, $p_i(v_{i1},\ldots, v_{ij})\cdot u_{ij}=p_i(u_{i1},\ldots, u_{ij})\cdot u_{ij}$ as the mechanism is a memoryless step. For each $o_j$ with $v_{ij}>0$ and $u_{ij}=0$, $p_i(v_{i1},\ldots, v_{ij})\cdot u_{ij}=p_i(u_{i1},\ldots, u_{ij})\cdot u_{ij}=0$. For each $o_j$ with $v_{ij}=0$ and $u_{ij}>0$, $p_i(v_{i1},\ldots, v_{ij})\cdot u_{ij}=0$ and $p_i(u_{i1},\ldots, u_{ij})\cdot u_{ij}\geq 0$ as the mechanism is non-wasteful. Consequently, the mechanism is strategy-proof.

Consider a strategy-proof mechanism. First, assume that it is not a step and $p_i(u_{i1},$ $\ldots,u_{i(j-1)},v_{ij})$ admits different values for different positive values of $v_{ij}$ supposing that the bids of other agents for items $o_1$ to $o_j$ are fixed. WLOG, we can suppose that item $o_j$ is the last item to arrive. We can also suppose $u_{ij}>0$ as the case $u_{ij}=0$ is trivial. Agent $i$ has an incentive to report $v_{ij}>u_{ij}$ (or $v_{ij}<u_{ij}$) and, thus, strictly increase $p_i(u_{i1},\ldots,u_{i(j-1)},u_{ij})$ and $\overline{u}_{i}(u_{i1},\ldots,u_{i(j-1)},u_{ij})$. Second, assume that the mechanism is a step but not memoryless. Suppose that agent $i$ gets different probabilities for item $o_j$ for alternative bids $v_{ik}$ compared to their sincere bids $u_{ik}$ with $k<j$. WLOG, for each $o_k$ with $k<j$, we suppose that $p_i(v_{i1},\ldots, v_{ik})=p_i(u_{i1},\ldots, u_{ik})$. Otherwise, we truncate the problem to the first such round $j$. WLOG, we also suppose that $p_i(v_{i1},\ldots, v_{i(j-1)},u_{ij})>p_i(u_{i1},\ldots, u_{i(j-1)},u_{ij})$. Otherwise, we swap $v_{ik}$ for $u_{ik}$ for $k<j$. We let agent $i$ have utility $1$ for all items except $o_j$ and utility $j$ for $o_j$. Thus, the bids $v_{ik}$ increase the expected utility of agent $i$ compared to the bids $u_{ik}$. We reached contradictions under both assumptions. \qed
\end{proof}

The \mylike\ mechanism is a memoryless step and so is strategy-proof. We observe that the \myosd\ and \myorp\ mechanisms are also memoryless steps and, hence, are also both strategy-proof. On the other hand, the \myblike\ mechanism is just a step mechanism and is neither memoryless nor strategy-proof. Furthermore, the \mymax\ mechanism is only memoryless and the \mypar\ mechanism is neither a step nor a memoryless mechanism. Consequently, these two mechanisms are not strategy-proof. 

Thus far, we have made the strong assumption that an agent has complete knowledge of any future items. In practice, agents may have limited or even no knowledge about the future. We next capture this formally in terms of a definition of a weaker form of strategy-proofness. 

\begin{definition} (OSP)
A mechanism is \emph{online strategy-proof (OSP)} if, for each instance with $m\in\mathbb{N}$ items and $j\in\lbrace 1,\ldots,m\rbrace$, no agent $i\in N$ can strictly increase $\overline{u}_{i}(\Pi_j)$ by reporting any bid $v_{ij}$ other than $u_{ij}$, supposing agent $i$ bids sincerely for $o_1$ to $o_{j-1}$ and all other agents bid sincerely for items $o_1$ to $o_j$.
\end{definition}

Indeed, it is harder for an agent to benefit from a strategic bidding with only partial information of the future. For this reason, many mechanisms that are not strategy-proof are online strategy-proof. For example, the \myblike\ mechanism is online strategy-proof with no knowledge of future items, but stops being strategy-proof with complete knowledge of these future items even if all utilities are just 0 or 1  \cite{aleksandrov2015ijcai}. In the other direction, it is easy to show that a mechanism that is strategy-proof is also online strategy-proof. The reason for this is simple. If an agent cannot increase their expected utility by misreporting their utilities for any subset of items, then they cannot do it by misreporting their utility for any individual item, including the last one. We give a simple characterization of mechanisms that are online strategy-proof.

\begin{theorem}\label{thm:two}
A non-wasteful mechanism for online fair division is online strategy-proof iff it is a step mechanism. 
\end{theorem}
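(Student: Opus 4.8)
The plan is to follow the structure of the proof of Theorem~\ref{thm:one}, but to exploit the fact that online strategy-proofness only constrains a \emph{single-round} deviation at round $j$ while agent $i$ bids sincerely on $o_1$ to $o_{j-1}$. The crucial simplification is that, because the mechanism is online, the allocation of any item $o_h$ with $h<j$ is determined by the bids for $o_1$ to $o_h$ alone and therefore cannot depend on the later bid $v_{ij}$; hence $p_{ih}(\Pi_j)=p_{ih}(\Pi_{j-1})$ stays fixed under sincere past bidding for every $h<j$. Consequently, when agent $i$ changes only $v_{ij}$, the sole term of $\overline{u}_i(\Pi_j)=\sum_{h=1}^{j}p_{ih}(\Pi_j)\cdot u_{ih}$ that can move is $p_i(\Pi_j)\cdot u_{ij}$. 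This is precisely why the memoryless requirement of Theorem~\ref{thm:one} drops out here: the history is held fixed by the definition of OSP, so only the step property is relevant.

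For the first direction, I would suppose the mechanism is a step mechanism and fix an instance, a round $j$, and an agent $i$ who bids sincerely on $o_1$ to $o_{j-1}$ while the others bid sincerely on $o_1$ to $o_j$. By the step property, $p_i(\Pi_j)$ equals $0$ when $v_{ij}=0$ and equals one common value $c$ for every $v_{ij}>0$ (with the fixed bids as above). If $u_{ij}=0$, the round-$j$ term is $0$ for every report, so no deviation helps. If $u_{ij}>0$, the sincere report yields $c\cdot u_{ij}$; any positive misreport yields the same $c\cdot u_{ij}$, and the misreport $v_{ij}=0$ yields $0\le c\cdot u_{ij}$. Since the remaining terms are unchanged, $\overline{u}_i(\Pi_j)$ cannot strictly increase, so the mechanism is OSP.

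For the converse, I would suppose the mechanism is OSP but, for some instance, $i$ and $o_j$, the function $p_i(\Pi_j)$ is not a step function. Non-wastefulness forces $p_i(\Pi_j)=0$ whenever $v_{ij}=0$, so the only way to violate the step property is that $p_i(\Pi_j)$ takes two different values at two positive bids $a,b$; relabelling, assume $p_i(\ldots,a)<p_i(\ldots,b)$. I would then build the violating instance by keeping all the fixed bids as in this witness, declaring agent $i$'s sincere utilities on $o_1$ to $o_{j-1}$ to be exactly those fixed past bids, and setting the sincere utility $u_{ij}=a>0$. Bidding $b$ instead of $a$ at round $j$ leaves every earlier probability unchanged but raises the round-$j$ term from $p_i(\ldots,a)\cdot a$ to $p_i(\ldots,b)\cdot a$, a strict increase since $a>0$. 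This contradicts OSP, completing the argument.

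The step that requires the most care is the reduction to a single moving term: one must justify that a round-$j$ deviation cannot disturb the marginal allocation probabilities of the earlier items. This is immediate from the online, round-by-round structure of the mechanisms in Section~\ref{sec:mec}, but it is the pivot of both directions and is what lets the proof dispense with the memoryless condition that Theorem~\ref{thm:one} needed.
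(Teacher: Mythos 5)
Your proof is correct and takes essentially the same approach as the paper: the ``if'' direction holds the (now fixed) history constant and uses the step property to show the only movable term $p_i(\Pi_j)\cdot u_{ij}$ cannot strictly increase, and your ``only if'' construction --- taking a witness of non-stepness at two positive bids, declaring the fixed past bids to be agent $i$'s sincere utilities, and setting $u_{ij}>0$ --- is precisely the single-round deviation the paper obtains by invoking the second part of the proof of Theorem~\ref{thm:one}. Your proposal merely makes explicit two details the paper leaves implicit (non-wastefulness pinning the zero-bid case, and the online structure fixing earlier probabilities), which is a faithful elaboration rather than a different route.
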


\begin{proof}
We show the ``if'' direction. Suppose the mechanism is a step. Consider an instance, an agent $i\in N$ and an item $o_j$. The allocation of this item does not have an impact on the allocation of earlier items as this is now fixed. If $u_{ij}>0$, then agent $i$ has no incentive to report $0$ for it as their expected utility can only decrease, and also has no incentive to report any positive value $v_{ij}\not=u_{ij}$ as their probability for item $o_j$ is a step function. If $u_{ij}=0$, then agent $i$ has no incentive to report $v_{ij}>0$ as their expected utility cannot increase. Hence, $i$ cannot increase $\overline{u}_{i}(\Pi_j)$. The mechanism is online strategy-proof. We next sketch the ``only if'' direction. Suppose the mechanism is not a step. The result follows by the second part of the proof of Theorem~\ref{thm:one}.\qed
\end{proof}

It follows immediately that the \myosd, \myorp, \mylike\ and \myblike\ mechanisms are all online strategy-proof. In contrast, the \mymax\ and \mypar\ mechanisms are not as they are not steps and agents have an incentive to report a larger bid for an item.

To sum up, we might use the \myosd, \myorp, or \mylike\ mechanism for strategy-proofness with complete information. However, for online strategy-proofness with no information about future items, we can also use the \myblike\ mechanism.

\section{Envy-Freeness}\label{sec:ef}

We continue with envy-freeness. We suppose agents bid sincerely. This might be because we use a mechanism that is strategy-proof or online strategy-proof. There is \emph{no} envy-free ex post mechanism \cite{aleksandrov2015ijcai}. We, therefore, mainly focus on fairness in expectation. Uncertainty about the future means that envy-freeness ex ante is now harder to achieve than in the offline setting. Nevertheless, it is always \emph{possible} as the \mylike\ mechanism is envy-free ex ante. 

By Example~\ref{exp:one}, the \myorp\ and \mylike\ mechanisms can return different ex post allocations. Nevertheless, they are ex ante equivalent and, therefore, envy-free ex ante. Unfortunately, ex ante equivalence to the \mylike\ mechanism only provides a partial characterization as there is an unbounded number of envy-free ex ante mechanisms that are \emph{not} ex ante equivalent to it. We show this in Example~\ref{exp:two}.

\begin{example}\label{exp:two}
Let us consider the fair division of items $o_1$ and $o_2$ to agents $1$ and $2$ with utilities as follows: $u_{11}=1$, $u_{12}=1$, $u_{21}=0$ and $u_{22}=1$. Further, consider the mechanism that works as \mylike\ on each instance except on this one in which it gives item $o_2$ to agent $2$ with some probability in $(1/2,1]$. This mechanism is envy-free ex ante but it is not ex ante equivalent to \mylike.\qed
\end{example}

In Example~\ref{exp:two}, the mechanism is neither memoryless, nor a step. Therefore, by Theorem~\ref{thm:one}, it is not strategy-proof. However, we can give a complete characterization of \emph{all} strategy-proof and envy-free ex ante mechanisms.

\begin{theorem}\label{thm:three}
A non-wasteful mechanism for online fair division is strategy-proof and envy-free ex ante iff it is ex ante equivalent to the \mylike\ mechanism. 
\end{theorem}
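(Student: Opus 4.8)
The theorem is an "iff," so I need both directions. The backward direction (ex ante equivalent to \mylike\ $\Rightarrow$ SP and EFA) should be the easy half, and I would dispatch it first. Since \mylike\ is a memoryless step mechanism, Theorem~\ref{thm:one} immediately gives strategy-proofness; and because SP depends only on the marginal probabilities $p_i(\Pi_j)$ (as the proof of Theorem~\ref{thm:one} shows, $\overline{u}_i$ is a function of these probabilities alone), ex ante equivalence preserves SP. For EFA, I would note that $\overline{u}_{ik}(\Pi_m)$ is determined entirely by the item probabilities $p_{kh}(\Pi_m)$ via the additive formula $\overline{u}_{ik}(\Pi_m)=\sum_h p_{kh}(\Pi_m)u_{ih}$; hence any mechanism ex ante equivalent to \mylike\ produces the same expected utilities and inherits \mylike's envy-freeness ex ante.

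The forward direction is the substantive one. Assume the mechanism $M$ is non-wasteful, SP, and EFA. By Theorem~\ref{thm:one}, $M$ is a memoryless step mechanism. The plan is to show that for every instance, every round $j$, and every agent $i$, the probability $p_i(\Pi_j)$ equals the \mylike\ probability, namely $1/|\{k : v_{kj}>0\}|$ when $i$ bids positively for $o_j$ and $0$ otherwise. I would argue by induction on the round $j$ and exploit the memoryless step structure: because $M$ is memoryless, $p_i(\Pi_j)$ depends only on the current bids $v_{1j},\ldots,v_{nj}$ for $o_j$, and because it is a step function it depends on these only through which agents bid positively. So the conditional allocation rule at each round is a fixed function of the set $S_j=\{k : v_{kj}>0\}$ of positive bidders, and by non-wastefulness it distributes probability $1$ among exactly the agents in $S_j$. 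It then suffices to prove this fixed rule is the uniform one on $S_j$.

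To pin down uniformity I would lean on EFA through carefully chosen instances. Fix any round and any target set $S$ of positive bidders; I want to show each agent in $S$ gets conditional probability $1/|S|$. The key device is to construct an instance, using the memoryless property to freely set utilities on prior rounds to anything, in which the expected-utility comparison $\overline{u}_{ii}\ge\overline{u}_{ik}$ forces equal probabilities. Concretely, I would design symmetric instances where two agents $i,k\in S$ have identical utility profiles except possibly on one item: EFA gives $\overline{u}_{ii}\ge\overline{u}_{ik}$ and, by symmetry of their roles, also $\overline{u}_{kk}\ge\overline{u}_{ki}$; with matched utilities these two inequalities squeeze the probabilities $p_i$ and $p_k$ for the contested item to be equal. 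Iterating over all pairs in $S$ yields that every agent in $S$ receives the same conditional probability, which must therefore be $1/|S|$ by non-wastefulness. Feeding the memoryless/step reduction back in, the same uniform rule governs every round of every instance, so $p_i(\Pi_j)$ matches \mylike\ everywhere, establishing ex ante equivalence.

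The main obstacle I anticipate is the uniformity argument itself: EFA is a statement about cumulative expected utilities $\overline{u}_{ik}(\Pi_m)=\sum_h p_{kh}u_{ih}$ summed over all items, whereas I need to isolate the single conditional probability for one item in one round. The memoryless property is what makes this tractable—it lets me localize $p_i(\Pi_j)$ to the current bids and choose the utilities on all other rounds to neutralize their contribution to the envy comparison (for instance by making agents indifferent, or by using the freedom in prior-round utilities so the only surviving terms in $\overline{u}_{ii}-\overline{u}_{ik}$ are the contested ones). Getting these auxiliary instances to cleanly cancel all the off-target terms while preserving the required symmetry between agents $i$ and $k$ is the delicate bookkeeping, and is where I would spend most of the care.
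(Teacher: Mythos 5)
Your proposal is correct, and its forward direction takes a genuinely different route from the paper's. The paper argues by induction on rounds: at the first round $j$ where the mechanism disagrees with \mylike, it picks agents $i,k$ who like $o_j$ with $p_i(\Pi_j)<p_k(\Pi_j)$ and, exploiting the step property (which makes the round-$j$ probabilities insensitive to the size of a positive bid), \emph{engineers} the utility $u_{ij}=1-(\overline{u}_{ik}(\Pi_{j-1})-\overline{u}_{ii}(\Pi_{j-1}))/(p_k(\Pi_j)-p_i(\Pi_j))>0$ so that the probability gap at round $j$ outweighs whatever envy-free slack agent $i$ accumulated over $o_1,\ldots,o_{j-1}$, producing ex ante envy in the truncated instance --- a contradiction; the induction hypothesis is what controls the accumulated terms. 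You instead localize the rule first (memoryless $+$ step $\Rightarrow$ the conditional rule at each round is a function of the current positive-bidder set $S$ alone) and then force uniformity with symmetric test instances: with matched utility profiles for $i$ and $k$, the two EFA inequalities $\overline{u}_{ii}\geq\overline{u}_{ik}$ and $\overline{u}_{kk}\geq\overline{u}_{ki}$ give $\sum_h (p_{ih}-p_{kh})\,u_{ih}=0$, and since the step property lets you vary the common utilities without moving any probability, the coefficients vanish item by item; the cleanest instantiation of your own suggestion is to make $i$ and $k$ indifferent to everything except the contested item, which yields $p_i(\Pi_j)=p_k(\Pi_j)$ outright with no bookkeeping at all, and then pairwise equality on $S$ plus non-wastefulness gives $1/|S|$. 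What your route buys is the elimination of the paper's induction and its delicately chosen $u_{ij}$; what the paper's route buys is that it never needs to rebuild a fully symmetric instance, only to perturb one utility at the first point of disagreement. One shared caveat: both arguments rest on the gloss, stated in the paper just before Theorem~\ref{thm:one}, that for a memoryless step mechanism $p_i(\Pi_j)$ depends only on \emph{which} agents bid positively for $o_j$; the literal definitions only give invariance to an agent's \emph{own} bids (e.g.\ the paper's modified $u_{ij}$ must also leave $p_k(\Pi_j)$ unchanged, and your test-instance transfer must leave the round-$j$ probabilities unchanged when other rounds' bids are rewritten), so your use of this consequence is on exactly the same footing as the paper's. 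Your backward direction coincides with the paper's.
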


\begin{proof}
If a mechanism is ex ante equivalent to \mylike, then it is envy-free ex ante and a memoryless step by the definition of \mylike. By Theorem~\ref{thm:two}, the mechanism is strategy-proof. If a mechanism is envy-free ex ante and strategy-proof, then it is a memoryless step. We show that it is ex ante equivalent to \mylike\ by induction on the round number $j$. In the base case, the mechanism is clearly ex ante equivalent to \mylike. In the step case, suppose that the mechanism is ex ante equivalent to \mylike\ for items $o_1$ to $o_{j-1}$ (i.e.\ hypothesis) but not for item $o_j$. That is, there are two agents $i,k\in N$ that like item $o_j$ with $p_{i}(\Pi_j)<p_{k}(\Pi_j)$. As the mechanism is envy-free ex ante up to round $(j-1)$, we have that $\overline{u}_{ii}(\Pi_{j-1})\geq \overline{u}_{ik}(\Pi_{j-1})$. As the mechanism is memoryless step, we can suppose that $u_{ij}=1-(\overline{u}_{ik}(\Pi_{j-1})-\overline{u}_{ii}(\Pi_{j-1}))/(p_{k}(\Pi_j)-p_{i}(\Pi_j))>0$. We, hence, obtain that $\overline{u}_{ik}(\Pi_{j-1})-\overline{u}_{ii}(\Pi_{j-1})+(p_{k}(\Pi_j)-p_{i}(\Pi_j))\cdot u_{ij}>0$, or $i$ envies ex ante $k$ for $o_1$ to $o_j$. This contradicts the fact that the mechanism is envy-free ex ante up to round $j$. Consequently, $p_{i}(\Pi_j)=p_{k}(\Pi_j)$. The result follows.
\qed
\end{proof}

We can give similar results if we weaken strategy-proof mechanisms to memoryless or step mechanisms. We omit these proofs for reasons of space.

\begin{proposition}\label{pro:one}
A step mechanism for online fair division is envy-free ex ante iff it is ex ante equivalent to the \mylike\ mechanism. 
\end{proposition}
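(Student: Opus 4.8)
The plan is to prove both directions directly, reusing the machinery of Theorem~\ref{thm:three} but isolating exactly where the weaker \emph{step} hypothesis (rather than \emph{memoryless step}) already suffices. The ``if'' direction is immediate: if a step mechanism is ex ante equivalent to \mylike, then it induces on every instance the same item probabilities $p_l(\Pi_j)$ as \mylike, and since each $\overline{u}_{il}(\Pi_m)=\sum_h p_{lh}(\Pi_m)\cdot u_{ih}$ depends on the outcome only through these probabilities, the mechanism returns the same expected utilities as \mylike; as \mylike\ is envy-free ex ante, so is the mechanism. The real content is the ``only if'' direction, which I would reduce to the following claim: a step, non-wasteful, envy-free ex ante mechanism must give every agent liking $o_j$ exactly probability $1/L_j$, where $L_j$ denotes the number of agents liking $o_j$; this is precisely the \mylike\ outcome, hence ex ante equivalence.

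My first instinct is to mimic the induction in the proof of Theorem~\ref{thm:three}: take a round $j$ and a witnessing instance in which two liking agents $i,k$ satisfy $p_i(\Pi_j)<p_k(\Pi_j)$, and inflate $u_{ij}$ until $i$ envies $k$ ex ante. The main obstacle is that this step is not innocent for a merely \emph{step} mechanism. The step property pins down only agent $i$'s \emph{own} probability, guaranteeing that $p_i(\Pi_j)$ is constant in the magnitude of $v_{ij}>0$; it says nothing about how $p_k(\Pi_j)$ reacts to $v_{ij}$, and indeed a step (even memoryless step) mechanism may let $p_k(\Pi_j)$ shrink towards $p_i(\Pi_j)$ as $u_{ij}$ grows, so naively enlarging $u_{ij}$ need not create envy against $k$. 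Overcoming this is the crux.

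To get around it I would argue against \emph{all} liking agents at once rather than against a single $k$. Fix a liking agent $i$ and consider the family of instances obtained from the given one by replacing $u_{ij}$ with an arbitrary $t>0$ while freezing everything else. By the step property, $p_i(\Pi_j)=c_i$ is independent of $t$; by the online nature of the mechanism, the allocation of $o_1,\ldots,o_{j-1}$, and hence $\overline{u}_{ii}(\Pi_{j-1})$ and each $\overline{u}_{il}(\Pi_{j-1})$, cannot depend on the future bid $v_{ij}$ and is therefore also independent of $t$. Envy-freeness ex ante applied to $i$ against each liking $l$ gives $\overline{u}_{ii}(\Pi_{j-1})+c_i\,t\geq \overline{u}_{il}(\Pi_{j-1})+p_l(\Pi_j)\,t$, that is $(p_l(\Pi_j)-c_i)\,t\leq \overline{u}_{ii}(\Pi_{j-1})-\overline{u}_{il}(\Pi_{j-1})$, whose right-hand side is constant in $t$. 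Summing these $L_j$ inequalities (the $l=i$ term contributes $0$ on both sides) and using non-wastefulness, $\sum_{l\text{ likes }o_j}p_l(\Pi_j)=1$, yields $1\leq L_j\,c_i + C/t$ for a $t$-independent constant $C$. Letting $t\to\infty$ gives $c_i\geq 1/L_j$, and since $c_i=p_i(\Pi_j)$ in the original instance by the step property, every agent liking $o_j$ receives it with probability at least $1/L_j$. As these $L_j$ probabilities sum to $1$, each equals $1/L_j$, matching \mylike.

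I expect the bookkeeping with the additive and online definitions to be routine; the only genuinely delicate point is the one flagged above, and the averaging trick is what lets the bare \emph{step} hypothesis do the work that \emph{memoryless step} did in Theorem~\ref{thm:three}. A pleasant by-product is that this argument dispenses with the explicit induction on $j$ used there, since the online independence of the first $j-1$ rounds from $v_{ij}$ is all that the derivation needs.
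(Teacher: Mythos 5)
Your proof is correct, and it takes a genuinely different route from the paper's. The paper actually omits the proof of Proposition~\ref{pro:one} (it only remarks that the result is ``similar'' to Theorem~\ref{thm:three}); that template argues by induction on $j$, picks a single pair $i,k$ with $p_{i}(\Pi_j)<p_{k}(\Pi_j)$, and substitutes the one finite value $u_{ij}=1-(\overline{u}_{ik}(\Pi_{j-1})-\overline{u}_{ii}(\Pi_{j-1}))/(p_{k}(\Pi_j)-p_{i}(\Pi_j))$ to manufacture ex ante envy, justifying the substitution by the memoryless step property. You correctly identify why that substitution is not licensed by the bare step hypothesis: the step property only freezes agent $i$'s \emph{own} probability $p_i(\Pi_j)$ as $v_{ij}$ ranges over positive values, and says nothing about how $p_k(\Pi_j)$ reacts to $v_{ij}$, so $p_k$ could drift toward $p_i$ as $u_{ij}$ grows and the single-pair argument could fail (under the paper's literal definitions this worry even applies to memoryless step mechanisms, where the paper's proof leans on the informal gloss that $p_i(\Pi_j)$ depends only on the combination of non-zero round-$j$ bids). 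Your fix is sound: pitting $i$ against all $L_j$ agents liking $o_j$ at once, summing the inequalities $(p_l(\Pi_j)-c_i)\,t\leq \overline{u}_{ii}(\Pi_{j-1})-\overline{u}_{il}(\Pi_{j-1})$, using non-wastefulness to replace $\sum_l p_l(\Pi_j)$ by $1$, and letting $t\to\infty$ forces $c_i\geq 1/L_j$; since these $L_j$ lower bounds sum to $1$, each probability equals $1/L_j$, which together with the fact that non-wastefulness gives zero probability to non-likers is exactly the \mylike\ outcome. The supporting steps are all legitimate in the paper's framework: the online property does make the distribution over $\Pi_{j-1}$, and hence each $\overline{u}_{il}(\Pi_{j-1})$, independent of $v_{ij}$, and applying envy-freeness ``up to round $j$'' via the $j$-item prefix instance is the same implicit truncation the paper uses in Theorem~\ref{thm:three}. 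What your approach buys: it is robust under the formal definition of a step mechanism (no hidden assumption about other agents' probabilities), it dispenses with the induction on $j$ (which the paper only needs to guarantee positivity of the substituted utility, made unnecessary by the limit), and since memoryless step mechanisms are step mechanisms it subsumes the ``only if'' direction of Theorem~\ref{thm:three} as a corollary; what the paper's route buys is a shorter, single-witness argument once one grants that round-$j$ probabilities depend only on the support of round-$j$ bids. Note only that your trick does not extend to Proposition~\ref{pro:two}, since for a merely memoryless mechanism $c_i$ may itself vary with $t$.
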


\begin{proposition}\label{pro:two}
A memoryless mechanism for online fair division is envy-free ex ante iff it is ex ante equivalent to the \mylike\ mechanism. 
\end{proposition}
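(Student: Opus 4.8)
The plan is to prove both directions, disposing of the easy implication first and then reducing the hard one to a one-item analysis through memorylessness.

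For the ``if'' direction, suppose the mechanism is ex ante equivalent to \mylike. By definition this means it induces the same marginal probabilities $p_{ik}(\Pi_m)$ as \mylike\ for every agent, item and instance. Since each expected utility $\overline{u}_{ik}(\Pi_m)=\sum_{h=1}^m p_{kh}(\Pi_m)\cdot u_{ih}$ is a function of these marginals alone, the two mechanisms yield identical expected utilities. As \mylike\ is envy-free ex ante, it satisfies $\overline{u}_{ii}(\Pi_m)\geq\overline{u}_{ik}(\Pi_m)$ for all $i,k$, and hence so does the given mechanism. This step is routine.

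For the ``only if'' direction, suppose the mechanism is memoryless and envy-free ex ante. The key observation is that memorylessness lets me read off the probability for any item from the bids placed for that item alone: for a fixed item $o_j$ carrying the bid profile $(v_{1j},\ldots,v_{nj})$, the value $p_i(\Pi_j)$ does not depend on the allocation history, so it coincides with the probability the mechanism assigns in the single-item instance whose only item carries that same profile. I would therefore fix an arbitrary profile $\vec{w}=(w_1,\ldots,w_n)$ and analyse the single-item instance with utilities $\vec{w}$. Writing $p_i$ for the probability agent $i$ receives this item, envy-freeness ex ante gives, for every liker $i$ (i.e.\ $w_i>0$) and every $k$, the inequality $p_i\cdot w_i=\overline{u}_{ii}(\Pi_1)\geq\overline{u}_{ik}(\Pi_1)=p_k\cdot w_i$, whence $p_i\geq p_k$. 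Applying this with the roles of two likers interchanged forces $p_i=p_k$ for all likers, and non-wastefulness (the probabilities sum to one and vanish for non-likers) then pins each liker's probability to $1/L$, where $L$ is the number of agents liking the item. This is exactly the probability \mylike\ assigns, so by the memoryless transfer above the mechanism reproduces \mylike's marginals for every item in every instance, i.e.\ it is ex ante equivalent to \mylike.

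The main obstacle is the transfer step: I must be careful that memorylessness genuinely removes all dependence of $p_i(\Pi_j)$ on the past, so that the single-item computation is legitimate for $o_j$ sitting inside an arbitrary instance. Once that reduction is justified, the envy-freeness argument on a single item is a two-line inequality and the normalisation via non-wastefulness is immediate. An alternative, should one prefer to mirror the proof of Theorem~\ref{thm:three}, is to argue by induction on the round $j$: assuming ex ante equivalence up to round $j-1$ and a pair of likers $i,k$ of $o_j$ with $p_i(\Pi_j)<p_k(\Pi_j)$, I would zero out agent $i$'s utilities for $o_1,\ldots,o_{j-1}$ so that $\overline{u}_{ii}(\Pi_{j-1})=\overline{u}_{ik}(\Pi_{j-1})=0$; memorylessness keeps the round-$j$ probabilities intact, and the residual envy $(p_k(\Pi_j)-p_i(\Pi_j))\cdot u_{ij}>0$ contradicts envy-freeness ex ante. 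Either route isolates memorylessness, in place of the step property used for Theorem~\ref{thm:three}, as the feature that neutralises the allocation history.
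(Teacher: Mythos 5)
The paper itself omits its proof of Proposition~\ref{pro:two} (``for reasons of space''), so your argument must stand on its own, and its main route has a genuine gap that you yourself flagged as ``the main obstacle.'' The paper's formal definition of memorylessness is strictly weaker than what your transfer step needs: $p_i(\Pi_j)$ is required to take the same value only across agent $i$'s \emph{own} past bids $v_{i1},\ldots,v_{i(j-1)}$, with the \emph{other} agents' bids for $o_1$ to $o_j$ --- including their past bids --- held fixed. Memorylessness therefore does not let you identify $p_i(\Pi_j)$ with the probability in the single-item instance carrying the profile $(v_{1j},\ldots,v_{nj})$: the probability may still depend on the opponents' histories. For instance, with three agents who all like every item, a mechanism setting $p_1(\Pi_j)=1/3+g(v_{3,1},\ldots,v_{3,j-1})$, $p_2(\Pi_j)=1/3-g(v_{3,1},\ldots,v_{3,j-1})$ and $p_3(\Pi_j)=1/3$ makes each $p_i$ memoryless in the paper's sense yet history-dependent, so your single-item reduction fails. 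Your fallback induction inherits the same flaw in miniature: after you zero agent $i$'s utilities for $o_1,\ldots,o_{j-1}$, memorylessness keeps $p_i(\Pi_j)$ intact but says nothing about $p_k(\Pi_j)$ for $k\neq i$, since you have changed bids that $p_k$ is allowed to depend on; your ``residual envy'' inequality $(p_k(\Pi_j)-p_i(\Pi_j))\cdot u_{ij}>0$ uses the \emph{original} $p_k$ and is unjustified in the modified instance.

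The good news is that the fallback is repairable, because only \emph{your own} probability needs to be preserved. Fix an instance and an item $o_j$, let $L$ be the set of its likers, and truncate to $m=j$ items (legitimate: an online mechanism ignores future items). For each $i\in L$ build $I_i$ by zeroing agent $i$'s utilities for $o_1,\ldots,o_{j-1}$; memorylessness gives $p_i'(I_i)=p_i(\Pi_j)$. In $I_i$ agent $i$'s expected utilities collapse to $\overline{u}_{ii}=p_i'(I_i)\cdot u_{ij}$ and $\overline{u}_{ik}=p_k'(I_i)\cdot u_{ij}$, so envy-freeness ex ante yields $p_i'(I_i)\geq p_k'(I_i)$ for \emph{all} $k$; non-wastefulness in $I_i$ (the likers of $o_j$ are unchanged) gives $1=\sum_{k\in L}p_k'(I_i)\leq |L|\cdot p_i'(I_i)$, i.e.\ $p_i(\Pi_j)\geq 1/|L|$. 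Since this holds for every $i\in L$ and the original probabilities also sum to $1$ over $L$, every liker gets exactly $1/|L|$, which is precisely the \mylike\ marginal --- note this version needs no induction hypothesis at all, as the zeroing already wipes out the history term that the induction in Theorem~\ref{thm:three} was controlling. Your ``if'' direction is correct as written.
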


On a restricted preference domain, the \mylike\ mechanism characterizes all envy-free ex ante mechanisms, even without the assumption of strategy-proofness. The following result applies to common domains of positive cardinal, identical cardinal, identical ordinal, Borda (e.g.\ $1,2,\ldots,m$) or lexicographic (e.g.\ $2^0,2^1,\ldots,2^m$) utilities. This result holds for \emph{wasteful} (i.e.\ not non-wasteful) mechanisms as well.

\begin{theorem}\label{thm:four}
With non-zero cardinal utilities, a mechanism for online fair division is envy-free ex ante iff it is ex ante equivalent to the \mylike\ mechanism.
\end{theorem}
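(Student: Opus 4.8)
The plan is to prove both directions, with the forward (``if'') direction being immediate and the reverse (``only if'') direction carried by an induction on the round number that exploits the \emph{online} structure of the mechanism. For the ``if'' direction, note that in this domain every agent bids positively for every item, so ex ante equivalence to \mylike\ means $p_{ih}(\Pi_m)=1/n$ for all $i$ and $h$; then $\overline{u}_{ik}(\Pi_m)=\sum_{h}(1/n)u_{ih}$ does not depend on $k$, so no agent envies any other and the mechanism is envy-free ex ante.

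For the ``only if'' direction, the first thing I would stress is \emph{why} the online structure is essential: for a single fixed instance the finitely many envy-freeness inequalities $\sum_{h=1}^{m}(p_{ih}-p_{kh})u_{ih}\geq 0$ do not by themselves pin down the probabilities. The extra leverage comes from requiring envy-freeness ex ante for \emph{every} instance, together with the online property that the allocation of $o_1,\ldots,o_j$ does not depend on $o_{j+1},\ldots,o_m$. In particular, the length-$j$ prefix of any instance is itself a valid instance with all positive utilities, and the probabilities $p_{ih}$ (for $h\leq j$) it induces coincide with those of the full instance. Hence envy-freeness must hold at every prefix, which is exactly what makes the probabilities collapse to the uniform values.

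Concretely, I would argue by induction on $j$ the statement that $p_{ih}(\Pi_j)=1/n$ for all $i$ and all $h\leq j$. In the inductive step, apply envy-freeness ex ante to the length-$j$ prefix: for every pair $i,k$ we obtain $\sum_{h=1}^{j}p_{ih}u_{ih}\geq\sum_{h=1}^{j}p_{kh}u_{ih}$. By the induction hypothesis $p_{ih}=p_{kh}=1/n$ for every $h<j$, so all these terms cancel and the inequality reduces to $p_{ij}u_{ij}\geq p_{kj}u_{ij}$. Here the domain restriction does the crucial work: since $u_{ij}>0$ we may divide to get $p_{ij}\geq p_{kj}$, and swapping the roles of $i$ and $k$ yields $p_{ij}=p_{kj}$ for all $i,k$. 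Finally, because every agent bids positively for $o_j$, the full unit of probability for $o_j$ is shared among the $n$ agents, so $\sum_i p_{ij}=1$ and therefore $p_{ij}=1/n$. This matches \mylike\ exactly, giving ex ante equivalence.

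The main obstacle, and the step I would be most careful about, is the clean cancellation in the inductive step: it relies simultaneously on the induction hypothesis (to equalize the prefix terms), on the online property (to guarantee the prefix probabilities are unchanged by later items, so the hypothesis applies to the same numbers), and on positivity of utilities (to divide by $u_{ij}$). It is precisely the last ingredient that is unavailable in the general characterization of Theorem~\ref{thm:three}, which is why that result needed strategy-proofness whereas here the positive-utility domain suffices. The extension to wasteful mechanisms requires only the observation that in this domain no item is ever discarded---every item is desired by every agent---so the full probability mass is always allocated and the same argument still delivers $1/n$.
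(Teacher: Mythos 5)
Your proposal is correct and takes essentially the same route as the paper's proof: both handle the ``only if'' direction by induction on the round number, using the fact that non-zero utilities make every agent feasible under \mylike\ (so equivalence means probability $1/n$), cancelling the prefix terms via the induction hypothesis, and dividing the resulting inequality $p_{ij}u_{ij}\geq p_{kj}u_{ij}$ by $u_{ij}>0$ with the roles of $i$ and $k$ swapped to force $p_{ij}=p_{kj}$. Your explicit observations---that the online property makes each length-$j$ prefix a valid instance with unchanged probabilities, and that the full unit of probability is allocated so $p_{ij}=1/n$---simply spell out what the paper leaves implicit in the phrases ``envy-free ex ante up to round $j$'' and ``ex ante equivalent to \mylike.''
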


\begin{proof}
We first show the ``if'' direction. If a mechanism is ex ante equivalent to \mylike, then it is envy-free ex ante as \mylike. We next show the ``only if'' direction. The proof is by induction as in Theorem~\ref{thm:three}. In the step case, we consider $i,k\in N$ that like $o_j$. We have that $\overline{u}_{ii}(\Pi_{j-1})=\overline{u}_{ik}(\Pi_{j-1})$ and $\overline{u}_{kk}(\Pi_{j-1})=\overline{u}_{ki}(\Pi_{j-1})$ as the cardinal utilities are non-zero and the mechanism is ex ante equivalent to \mylike\ for $o_1$ to $o_{j-1}$ by the hypothesis. Hence, $p_{i}(\Pi_j)=p_{k}(\Pi_j)$ as the mechanism is envy-free ex ante up to round $j$. \qed
\end{proof}

We can also completely characterize a stronger notion of envy-freeness even with general utilities. Shared envy-freeness requires that each pair of agents are envy-free of each other only over the items that both agents in the pair like in common. We write $u^{\mbox{\scriptsize SEFP}}_{ik}(\pi_j)$ for the utility of agent $i\in N$ over the items in $\pi_{ji}$ that both agents $i$ and $k\in N$ like. We write $\overline{u}^{\mbox{\scriptsize SEFA}}_{ik}(\Pi_{j})$ for the expected utility of agent $i\in N$ over the items $o_1$ to $o_j$ that both agents $i$ and $k\in N$ like. 

\begin{center}
$u^{\mbox{\scriptsize SEFP}}_{ik}(\pi_j)=\displaystyle\sum_{\substack{o_h\in\pi_{ji}\\ u_{kh}>0}} u_{ih}$ \hspace{1cm} $\overline{u}^{\mbox{\scriptsize SEFA}}_{ik}(\Pi_j)=\displaystyle\sum_{\substack{h=1 \\ u_{kh}>0}}^j p_{ih}(\Pi_j)\cdot u_{ih}$
\end{center}

We note ${u}^{\mbox{\scriptsize SEFP}}_{ik}(\pi_{j})\leq {u}_{ii}(\pi_{j})$ and $\overline{u}^{\mbox{\scriptsize SEFA}}_{ik}(\Pi_{j})\leq \overline{u}_{ii}(\Pi_{j})$. A mechanism is \emph{shared envy-free ex post (SEFP)} iff, for each instance with $m\in\mathbb{N}$ items and allocation $\pi_m\in \Pi_m$ returned by the mechanism with positive probability, $\forall i,k\in N:u^{\mbox{\scriptsize SEFP}}_{ik}(\pi_m)\geq u_{ik}(\pi_m)$. A mechanism is {\em shared envy-free ex ante (SEFA)} iff, for each instance of $m\in\mathbb{N}$ items, $\forall i,k\in N:\overline{u}^{\mbox{\scriptsize SEFA}}_{ik}(\Pi_m)\geq \overline{u}_{ik}(\Pi_m)$. Shared envy-freeness coincides with envy-freeness with non-zero cardinal utilities. For this reason, shared envy-freeness is only possible in expectation. 

\begin{theorem}\label{thm:five}
A non-wasteful mechanism for online fair division is shared envy-free ex ante iff it
is ex ante equivalent to the \mylike\ mechanism.
\end{theorem}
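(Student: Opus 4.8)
The plan is to first reduce the shared-envy-free condition to a clean inequality on the single-item marginal probabilities, and then to mirror the inductive argument of Theorem~\ref{thm:three}. For an ordered pair of agents $i,k\in N$, note that because the mechanism is non-wasteful and agents bid sincerely, $p_{kh}(\Pi_m)=0$ whenever $u_{kh}=0$; hence $\overline{u}_{ik}(\Pi_m)=\sum_{h:\,u_{kh}>0}p_{kh}(\Pi_m)\,u_{ih}$, which has exactly the same range of summation as $\overline{u}^{\mbox{\scriptsize SEFA}}_{ik}(\Pi_m)=\sum_{h:\,u_{kh}>0}p_{ih}(\Pi_m)\,u_{ih}$. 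Thus the constraint $\overline{u}^{\mbox{\scriptsize SEFA}}_{ik}(\Pi_m)\geq\overline{u}_{ik}(\Pi_m)$ is equivalent to $\sum_{h:\,u_{kh}>0}(p_{ih}(\Pi_m)-p_{kh}(\Pi_m))\,u_{ih}\geq 0$. Both sides depend on the outcome only through the single-item marginals, which immediately settles the ``if'' direction: a mechanism ex ante equivalent to \mylike\ has \mylike's marginals, and for \mylike\ every liker of $o_h$ gets probability $1/d_h$ (with $d_h$ the number of likers), so the rewritten inequality holds with equality.

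For the ``only if'' direction I would induct on the number of items $m$. In the base case $m=1$ the rewritten inequality for the pair $(i,k)$ reads $(p_{i1}-p_{k1})u_{i1}\geq 0$, and the symmetric constraint for $(k,i)$ reads $(p_{k1}-p_{i1})u_{k1}\geq 0$; since both utilities are positive this forces $p_{i1}=p_{k1}$ for all likers, and non-wastefulness then yields $p_{i1}=1/d_1$, i.e.\ agreement with \mylike. For the inductive step I would exploit the online nature of the model: the allocation of $o_1,\ldots,o_{m-1}$ is produced without any information about $o_m$, so it coincides with the run on the truncated $(m-1)$-item instance and $p_{ih}(\Pi_m)=p_{ih}(\Pi_{m-1})$ for $h<m$. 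Applying the induction hypothesis to the truncation then gives $p_{ih}(\Pi_m)=[u_{ih}>0]/d_h$ for every $h<m$.

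The crux is to pin down the probabilities for the last item $o_m$. Fixing two likers $i,k$ of $o_m$ and splitting the rewritten inequality at $h=m$, every history term with $h<m$ and $u_{kh}>0$ either has $u_{ih}=0$ (so it vanishes) or has both agents liking $o_h$, in which case the hypothesis gives $p_{ih}=p_{kh}=1/d_h$ and the term again vanishes. The entire history thus collapses and the inequality reduces to $(p_{im}(\Pi_m)-p_{km}(\Pi_m))\,u_{im}\geq 0$; combined with its symmetric counterpart this forces $p_{im}(\Pi_m)=p_{km}(\Pi_m)$ for all likers of $o_m$, and non-wastefulness gives $1/d_m$, completing the induction.

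I expect the main obstacle to be precisely the point where Theorem~\ref{thm:three} relied on strategy-proofness: there one could freely inflate $u_{ij}$ without disturbing any probability, whereas here there is no step or memoryless assumption, so the last-item probabilities may in principle depend on the magnitudes of the bids. The resolution is to avoid perturbing utilities altogether and argue per instance, letting the induction hypothesis pin the whole history to \mylike's marginals so that the shared-envy terms for past items cancel identically; this is exactly what makes the shared notion, unlike plain EFA when zero utilities are allowed (cf.\ Example~\ref{exp:two}), strong enough to force exact agreement with \mylike. The remaining care is routine: checking that the symmetric pair of constraints $(i,k)$ and $(k,i)$ is available for every pair of likers and that non-wastefulness is applied correctly to zero out the terms outside an agent's support.
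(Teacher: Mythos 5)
Your proof is correct and follows essentially the same route as the paper's: an induction on rounds in which the hypothesis (ex ante equivalence to \mylike\ on the prefix) makes the history terms of the shared-envy constraint cancel, so the pair of constraints for $(i,k)$ and $(k,i)$ forces equal probabilities for the current item among its likers, with non-wastefulness pinning these to $1/d_j$. If anything, you are more explicit than the paper on two points it leaves implicit --- the truncation-to-a-prefix-instance step justified by the online nature of the mechanism, and the observation that no utility perturbation (hence no strategy-proofness, unlike in Theorem~\ref{thm:three}) is needed because the shared notion cancels the history identically.
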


\begin{proof}
If a mechanism is ex ante equivalent to \mylike, then it is envy-free ex ante. Every pair of agents receive each of their commonly liked item with the same probability. The mechanism is, therefore, shared envy-free ex ante. If a mechanism is shared envy-free ex ante, then the proof resembles the one of Theorem~\ref{thm:three}. In the step case, we consider round $j$ and agents $i,k$ that like item $o_j$. WLOG, assume that the mechanism is not ex ante equivalent to \mylike\ for item $o_j$ and $p_{i}(\Pi_j)<p_{k}(\Pi_j)$. By the hypothesis, the mechanism is ex ante equivalent to \mylike\ up to round $(j-1)$. Hence, $\overline{u}_{ik}(\Pi_{j-1})=\overline{u}^{\mbox{\scriptsize SEFA}}_{ik}(\Pi_{j-1})$ and $\overline{u}_{ki}(\Pi_{j-1})=\overline{u}^{\mbox{\scriptsize SEFA}}_{ki}(\Pi_{j-1})$. As the mechanism is shared envy-free ex ante up to round $j$, $p_{i}(\Pi_j)=p_{k}(\Pi_j)$. This contradicts our assumption. \qed
\end{proof}

If we limit ourselves to 0/1 utilities, we say that a mechanism is \emph{bounded envy-free ex post with 1 (BEFP)} iff, for each instance of $m\in\mathbb{N}$ items and $\pi_m\in\Pi_m$ returned by the mechanism with positive probability, $\forall i,k\in N: u_{ii}(\pi_m)+1\geq u_{ik}(\pi_m)$. For example, the \myblike\ mechanism is bounded envy-free ex post with 1 \cite{aleksandrov2015ijcai}. In fact, we can immediately conclude the following partial characterization.

\begin{corollary}\label{cor:one}
With 0/1 cardinal utilities, a non-wasteful mechanism for online fair division is bounded envy-free ex post with 1 if it returns a subset of the allocations returned by the \myblike\ mechanism.
\end{corollary}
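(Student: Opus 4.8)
The plan is to reduce the corollary to a single structural fact about \myblike\ and then establish that fact by a balancing/counting argument. Since bounded envy-freeness ex post with 1 is a property that must hold for \emph{every} allocation in the support of a mechanism, and the mechanism under consideration returns only allocations that \myblike\ returns with positive probability, it suffices to prove that every allocation $\pi_m$ produced by \myblike\ (under sincere, hence non-wasteful, 0/1 bidding) satisfies $u_{ii}(\pi_m)+1\geq u_{ik}(\pi_m)$ for all $i,k\in N$. Any sub-support then inherits this property immediately, which is exactly the ``if'' claim.

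For the main step, I would fix such a $\pi_m$ and two agents $i,k$. With 0/1 utilities and non-wasteful allocation we have $u_{ii}(\pi_m)=|\pi_{mi}|$, while $u_{ik}(\pi_m)$ equals the number of items in $\pi_{mk}$ that $i$ likes; call this set $B$ and write $b=|B|$. If $b=0$ the inequality is trivial, so assume $b\geq 1$. Every item of $B$ is liked by both agents: by $u_{ih}=1$ for $i$, and by non-wastefulness for $k$ since $k$ received it. I would order the items of $B$ by arrival as $o_{h_1},\ldots,o_{h_b}$ with $h_1<\cdots<h_b$, and write $a_j(t)=|\pi_{tj}|$ for the number of items held by agent $j$ after round $t$.

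The key observation is the balancing rule of \myblike\ applied at the \emph{last}-arriving item $o_{h_b}$ of $B$. At round $h_b$ both $i$ and $k$ like $o_{h_b}$ and $k$ receives it, so $k$ is among the agents with the fewest items among those who like it; in particular $a_k(h_b-1)\leq a_i(h_b-1)$. Moreover $k$ has already received $o_{h_1},\ldots,o_{h_{b-1}}$ before round $h_b$, so $a_k(h_b-1)\geq b-1$. Chaining these and using that bundle sizes are non-decreasing yields $|\pi_{mi}|\geq a_i(h_b-1)\geq a_k(h_b-1)\geq b-1$, that is, $u_{ii}(\pi_m)+1\geq u_{ik}(\pi_m)$. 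Hence \myblike\ is bounded envy-free ex post with 1, and the corollary follows by the reduction above.

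I expect the only real obstacle to be organizing this balancing argument cleanly, specifically the choice to examine the last-arriving commonly liked item in $k$'s bundle, since that is the round at which $k$'s accumulated count is largest while still being dominated by $i$'s count through the feasibility rule. Everything else is routine counting, and the reduction step is immediate once one notes that the defining inequality is required per allocation rather than in expectation.
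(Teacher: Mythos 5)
Your proposal is correct, and it is worth noting that the paper itself offers essentially no argument for this corollary: it presents the statement as an immediate consequence of the externally cited fact that \myblike\ is bounded envy-free ex post with 1 (the starred reference to Aleksandrov \emph{et al.}, 2015), combined with precisely your reduction, namely that BEFP is a universally quantified per-allocation property and therefore passes automatically to any mechanism whose support on each instance is contained in \myblike's support. So your reduction step coincides with the paper's implicit reasoning, while your main step replaces the citation with a self-contained derivation. That derivation is sound: with sincere 0/1 bids and non-wastefulness, $u_{ii}(\pi_m)=|\pi_{mi}|$ since $i$ likes every item it receives; at the last-arriving item $o_{h_b}$ of $\pi_{mk}$ that $i$ likes, agent $i$ is a positive bidder, so the feasibility rule of \myblike\ forces $a_k(h_b-1)\leq a_i(h_b-1)$; the earlier items $o_{h_1},\ldots,o_{h_{b-1}}$ already sit in $k$'s bundle, giving $a_k(h_b-1)\geq b-1$; and monotonicity of bundle sizes chains these into $u_{ii}(\pi_m)\geq b-1=u_{ik}(\pi_m)-1$. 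Your choice to examine the \emph{last} commonly liked item is exactly the right organizing device, since it is the unique round at which $k$'s accumulated count is maximal among rounds where the balancing rule still bounds it by $i$'s count, and it makes transparent where the slack of exactly one unit originates (the item awarded at that round). The trade-off between the two routes is the usual one: the paper's treatment is shorter but opaque without the cited work, whereas your argument is longer but self-contained and pinpoints the mechanism-level reason for the bound.
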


Benade {et al.} \cite{benade2018} showed that the random assignment of each next item (i.e.\ \mylike) is asymptotically optimal in the ex post sense, with a bound of the (maximum) envy that increases as the number of rounds increases. Unfortunately, this means that we cannot put any trivial bound on the envy ex post in general. 

To sum up, we can use the \mylike\ or \myorp\ mechanism if we want envy-freeness ex ante. With 0/1 utilities, we can bound the ex post envy between agents to at most one unit of utility with the \myblike\ mechanism which also happens to be envy-free ex ante in this domain \cite{aleksandrov2015ijcai}.

\section{Pareto Efficiency}\label{sec:pe}

We consider lastly Pareto efficiency supposing agents act sincerely. With 0/1 utilities, each mechanism is Pareto efficient as the sum of agents' utilities in each returned allocation is $m$. This is not true in general. We start with Pareto efficiency ex post. The \myosd, \myorp\ and \mymax\ mechanisms are all Pareto efficient ex post. We might hope that a given Pareto efficient ex post mechanism returns some of the allocations returned by these three mechanisms. However, this does not hold as they may return only some of the Pareto efficient allocations. We illustrate this in Example~\ref{exp:three}.

\begin{example}\label{exp:three}
Let us consider the fair division of items $o_1$ and $o_2$ to agents $1$ and $2$ with utilities as in the below table.

\begin{center}
\begin{tabular}{|c|cc|} \hline
& item $o_1$ & item $o_2$ \\ \hline
agent 1 & \num{1} & \num{4}  \\
agent 2 & \num{2} & \num{3} \\ \hline
\end{tabular}
\end{center}

The allocation that gives $o_1$ to $1$ and $o_2$ to $2$ is Pareto efficient ex post. None of  \myosd, \myorp\ or \mymax\ returns this allocation. Note that \mypar\ does return it. \qed
\end{example}

By Example~\ref{exp:three}, we conclude that we cannot characterize all Pareto efficient ex post mechanisms in terms of allocations returned by the \myosd, \myorp\ and \mymax\ mechanisms. However, we can use the \mypar\ mechanism for this purpose.

\begin{theorem}\label{thm:six}
The \mypar\ mechanism returns only and all Pareto efficient ex post allocations.
\end{theorem}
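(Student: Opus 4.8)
The plan is to prove the two set inclusions separately: that every full allocation \mypar\ outputs with positive probability is Pareto efficient ex post (the ``only'' part), and that every Pareto efficient ex post allocation is output with positive probability (the ``all'' part). The engine for both parts is a single monotonicity observation, which I will call the \emph{prefix lemma}: if an allocation $\pi_j\in\Pi_j$ of $o_1,\ldots,o_j$ is Pareto efficient ex post, then the allocation $\pi_{j-1}\in\Pi_{j-1}$ obtained by deleting $o_j$ together with its owner is also Pareto efficient ex post.

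First I would prove the prefix lemma by contraposition. Suppose $\pi_{j-1}$ is dominated by some $\sigma_{j-1}\in\Pi_{j-1}$, i.e.\ $u_i(\sigma_{j-1})\geq u_i(\pi_{j-1})$ for all $i$ with a strict inequality for some $k$. Re-inserting $o_j$ into $\sigma_{j-1}$ by giving it to the same agent who holds it in $\pi_j$ yields $\sigma_j\in\Pi_j$; by additivity the contribution of $o_j$ is identical on both sides, so the inequalities are preserved and $\sigma_j$ dominates $\pi_j$, contradicting that $\pi_j$ is Pareto efficient. The ``only'' direction is then immediate: the agent who receives the last item $o_m$ in any output $\pi_m$ was \emph{feasible} at round $m$, which by the definition of \mypar\ means exactly that $\pi_m$ is Pareto efficient ex post.

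For the ``all'' direction I would take an arbitrary Pareto efficient ex post allocation $\pi_m$ and apply the prefix lemma repeatedly, peeling off $o_m,o_{m-1},\ldots$ one at a time, to conclude that every prefix $\pi_j$ with $j\in\lbrace 1,\ldots,m\rbrace$ is Pareto efficient ex post. Hence at each round $j$ the agent who owns $o_j$ in $\pi_m$ makes the extension from $\pi_{j-1}$ to $\pi_j$ Pareto efficient, so this agent is feasible and is chosen with positive conditional probability; this agent also bids positively, since an agent receiving an item valued at $0$ while another agent values it positively could be improved upon, contradicting efficiency, so the choice is consistent with non-wastefulness. Multiplying the positive conditional probabilities over the $m$ rounds shows that \mypar\ outputs $\pi_m$ with positive probability, and by induction on $j$ the intermediate prefixes are genuinely reachable.

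The hard part will be the well-definedness that underpins both directions: I must ensure the feasible set is never empty, i.e.\ that every Pareto efficient $\pi_{j-1}$ admits \emph{some} agent whose assignment of $o_j$ keeps the allocation Pareto efficient. This is more delicate than it looks, because the obvious candidate --- handing $o_j$ to an agent with maximal $u_{ij}$ --- can fail, since assigning $o_j$ to such an agent may create a profitable swap with an item already allocated under $\pi_{j-1}$ (precisely the phenomenon behind Example~\ref{exp:three}). My intended attack is to assume every extension of $\pi_{j-1}$ by $o_j\to a$ is dominated; using the prefix lemma one shows each dominating allocation must move $o_j$ to a \emph{different} agent $c(a)\neq a$, so the map $a\mapsto c(a)$ has no fixed point and therefore contains a cycle $a_1\to\cdots\to a_t\to a_1$. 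Aggregating the prefixes $\beta_l$ of these cyclically shifted dominating allocations yields $\sum_{l} u_i(\beta_l)\geq t\cdot u_i(\pi_{j-1})$ for every agent $i$, with a strict inequality for at least one $i$. The remaining gap --- turning this averaged improvement over genuine, indivisible allocations into a single allocation that dominates $\pi_{j-1}$ --- is the crux, since averaging does not in general return an integral allocation; I would close it with an item-level exchange argument that tracks which items actually move along the cycle, or equivalently by invoking the combinatorial ``no improving trade'' characterization of Pareto efficiency for additive utilities.
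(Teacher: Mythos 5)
Your first three paragraphs are, in substance, exactly the paper's proof: your prefix lemma is the paper's step ``otherwise, this allocation can be Pareto improved for $o_1$ to $o_j$ and together with the allocations of $o_{j+1}$ to $o_m$ in $\pi_m$ can Pareto improve $\pi_m$''; the ``only'' direction is the feasibility condition read at round $m$; and the trace argument for ``all'' is the paper's, with your positive-bid remark a harmless addition. Note that these two inclusions do \emph{not} rest on the non-emptiness question you raise afterwards: along the trace of a Pareto efficient $\pi_m$ the feasible set is never empty, because the prefix lemma itself guarantees the owner of $o_j$ in $\pi_m$ is feasible at every round, and the ``only'' direction concerns completed runs. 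So the theorem, as the paper proves it, is already done after your third paragraph.

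Your fourth paragraph, however, chases a false statement, so no item-level exchange argument can close the gap you honestly flag. It is not true that every Pareto efficient $\pi_{j-1}$ admits some agent whose extension by $o_j$ is Pareto efficient. Counterexample: two agents with utilities $u_1=(10,6,1,5)$ and $u_2=(10,3,2,5)$ for $o_1,\ldots,o_4$. The prefix allocation $(\lbrace o_1\rbrace,\lbrace o_2,o_3\rbrace)$ has utility vector $(10,5)$ and is Pareto efficient in $\Pi_3$: the only bundles giving agent $1$ at least $10$ are those containing $o_1$, with vectors $(10,5)$, $(16,2)$, $(11,3)$, $(17,0)$, none dominating. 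Every prefix of it is likewise efficient, so \mypar\ reaches this state with probability $1/8$. Yet giving $o_4$ to agent $1$ yields $(15,5)$, dominated by $(\lbrace o_1,o_2\rbrace,\lbrace o_3,o_4\rbrace)$ with vector $(16,7)$, and giving $o_4$ to agent $2$ yields $(10,10)$, dominated by $(\lbrace o_2,o_4\rbrace,\lbrace o_1,o_3\rbrace)$ with vector $(11,12)$; the feasible set is empty although both agents bid $5>0$. This pinpoints why your cycle-averaging step is irreparable: summing the dominations along the cycle shows only that a \emph{convex combination} of integral allocations dominates $\pi_{j-1}$, i.e.\ that $\pi_{j-1}$ is not \emph{fractionally} Pareto efficient --- here $(10,5)$ is dominated by the midpoint of $(16,2)$ and $(6,12)$ --- whereas Pareto efficiency ex post quantifies over integral allocations only, and the two notions genuinely differ. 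So you should simply delete the fourth paragraph: the two inclusions stand without the lemma. What your observation actually uncovers is a defect in the definition of \mypar\ itself, which the paper silently shares: off the traces of efficient allocations the mechanism can dead-end, so its conditional randomizations do not compose into a distribution of total mass one, and non-wastefulness fails at the stuck state.
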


\begin{proof}
By definition, the mechanism returns only PE ex post allocations. For this reason, we next only show that it returns all such allocations. Consider such an allocation $\pi_m$. Assume $\pi_m$ is not returned by it. Run the mechanism and follow $\pi_m$ until the first round $j\in(1,m]$ when some agent $i\in N$ gets $o_j$ in $\pi_m$ but $i$ is not feasible for $o_j$ given the sub-allocation $\pi_{j-1}$ of $\pi_m$ of $o_1$ to $o_{j-1}$. Such a round exists as $\pi_m$ is not returned by the mechanism. Further, $\pi_{j-1}$ is Pareto efficient ex post for $o_1$ to $o_{j-1}$. Otherwise, the mechanism would not get to round $j$ by following $\pi_m$. Also, the allocation extending $\pi_{j-1}$ by allocating $o_j$ to $i$ is Pareto efficient ex post. Otherwise, this allocation can be Pareto improved for $o_1$ to $o_j$ and together with the allocations of $o_{j+1}$ to $o_m$ in $\pi_m$  can Pareto improve $\pi_m$. This contradicts the Pareto efficiency of $\pi_m$. Hence, the allocation extending $\pi_{j-1}$ is Pareto efficient ex post. By the definition of the mechanism, it then follows that $i$ is feasible for $o_j$ which contradicts our assumption. Hence, $\pi_m$ is returned by the mechanism with positive probability. \qed
\end{proof}

By Theorem~\ref{thm:six}, we conclude that a non-wasteful mechanism for online fair division is Pareto efficient ex post iff it returns a subset of the allocations of the \mypar\ mechanism. Such a mechanism may not be strategy-proof. However, we can characterize \emph{all} mechanisms that are strategy-proof and Pareto efficient ex post. 

\begin{theorem}\label{thm:seven}
A non-wasteful mechanism for online fair division is strategy-proof and Pareto efficient ex post iff it is ex post equivalent to a probability distribution of the \myosd\ mechanisms.
\end{theorem}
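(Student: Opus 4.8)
The plan is to prove both directions, with essentially all the work in ``only if''. Throughout, a \emph{probability distribution of \myosd\ mechanisms} means a distribution $\mu$ over the $n!$ strict priority orders, executed by drawing $\sigma\sim\mu$ and running \myosd\ with $\sigma$; call an allocation a \myosd\ allocation if some order produces it under \myosd.

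For the ``if'' direction I would first recall that \myosd\ with a fixed order is a memoryless step mechanism, hence strategy-proof by Theorem~\ref{thm:one}, and is Pareto efficient ex post. I would then argue that any $\mu$-mixture inherits both properties. For strategy-proofness, since each \myosd\ run is deterministic, truthful bidding weakly maximizes agent $i$'s realized utility $u_i$ under every fixed $\sigma$ against sincere others; averaging these inequalities over $\mu$ with one common (mis)report shows truthful bidding maximizes $\overline{u}_i$, so the mixture is strategy-proof. For Pareto efficiency ex post, every allocation in the support of the mixture is returned by \myosd\ for some $\sigma$ and is therefore Pareto efficient ex post. Finally, both properties depend only on the induced distribution of allocations (strategy-proofness through the expected utilities, Pareto efficiency ex post through the support), so a mechanism ex post equivalent to such a mixture is itself strategy-proof and Pareto efficient ex post.

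For the ``only if'' direction, assume the mechanism is strategy-proof and Pareto efficient ex post. By Theorem~\ref{thm:one} it is a memoryless step mechanism, and by Theorem~\ref{thm:six} it returns only Pareto efficient ex post allocations. The crux is to upgrade ``Pareto efficient ex post'' to ``\myosd\ allocation'': I claim every $\pi$ returned with positive probability is a \myosd\ allocation. Associate with $\pi$ the directed priority graph on $N$ with an edge $i\to k$ whenever some item liked by both $i$ and $k$ is assigned to $i$; then $\pi$ is a \myosd\ allocation iff this graph is acyclic, since a topological order realizes it. Suppose instead it contains a cycle $i_1\to\cdots\to i_t\to i_1$ through items $o^1,\ldots,o^t$ (with $o^s$ assigned to $i_s$ and also liked by $i_{s+1}$). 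I would keep all positive-bidder sets fixed but rescale magnitudes---legal precisely because the mechanism is a step mechanism, so rescaling positive bids leaves its whole allocation distribution unchanged---so that rotating the cycle, i.e.\ reassigning each $o^s$ to $i_{s+1}$, makes every cycle agent strictly better off and affects no one else. Then the rotation Pareto-dominates $\pi$, so $\pi$ is not Pareto efficient ex post on the rescaled instance, yet the step mechanism still returns it with positive probability, contradicting Pareto efficiency ex post of the mechanism. Hence the graph is acyclic and $\pi$ is a \myosd\ allocation.

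It remains to recover a single distribution $\mu$ and match the whole joint law of allocations. I would define $\mu$ on a universal instance $U$ containing, for every nonempty $T\subseteq N$, one item liked exactly by the agents in $T$ with arbitrary positive magnitudes: under \myosd\ with $\sigma$ the item of $T$ goes to the $\sigma$-maximal agent of $T$, and since the map sending each $T$ to its winner recovers $\sigma$, distinct orders give distinct allocations on $U$. As the mechanism returns only \myosd\ allocations, its support on $U$ consists of these $n!$ distinct allocations, so setting $\mu(\sigma)$ to be the probability of the $\sigma$-allocation on $U$ yields a genuine distribution. The final and hardest step is to show that on an arbitrary instance $I$ the mechanism reproduces the $\mu$-mixture ex post, i.e.\ behaves as if it had committed to one latent order $\sigma\sim\mu$ and applied \myosd\ with $\sigma$ throughout; I expect this to be the main obstacle, since it is exactly the point that distinguishes the correlated \myorp\ from the independent \mylike. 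Here I would combine the memoryless step property---which forces the marginal law of the winner of each bidder-set to equal the one induced by $\mu$ on $U$---with the already-established fact that every realized allocation is consistent with a single order, so that the past winners constrain the future ones exactly as a latent order would. Pushing this through a round-by-round consistency/coupling argument should identify the conditional law of each winner given the history with that of the $\mu$-mixture, giving ex post equivalence on every instance and closing the proof.
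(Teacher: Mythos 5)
There is a genuine gap, and it sits exactly where you flag it: you never prove that the mechanism's distribution over allocations is reproduced by a mixture of \myosd\ mechanisms; you only conjecture that a ``round-by-round consistency/coupling argument should'' identify the conditional winner laws with those of a latent order $\sigma\sim\mu$. Nothing you establish beforehand rules out a memoryless step mechanism whose \emph{support} consists of \myosd\ allocations but whose \emph{probabilities} fail to match your universal-instance $\mu$ on other instances, so the closing sentences of your proposal are a research plan, not a proof. What you missed is that under the paper's definition of ex post equivalence (for each instance, the probability of each allocation $\pi_m$ agrees) the heavy cross-instance identification is unnecessary: since \myosd\ is deterministic for a fixed order, once every supported allocation is \myosd-realizable you may, on each instance, pick for each supported $\pi_m$ one order $\sigma$ realizing it and give $\sigma$ weight equal to the probability of $\pi_m$; this already exhibits the required distribution of \myosd{s}. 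The paper exploits precisely this reduction: it negates ex post equivalence into the existence of a first round $j$ at which the mechanism disagrees with \emph{every} order consistent with the realized prefix, chooses $\sigma$ so that the \myosd\ winner $2$ sits immediately before the mechanism's winner $1$, extracts an earlier commonly liked item $o_k$ held by $2$ under both, and derives a contradiction with Pareto efficiency ex post via a two-item swap after resetting the four relevant utilities using Theorem~\ref{thm:one}. So the step you declared ``hardest'' dissolves under the paper's reading; if instead one insists on a single instance-independent $\mu$ (which your universal-instance construction is aimed at), the difficulty is real, but then your proposal leaves the theorem unproved.

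A secondary, partly shared weakness: your cycle-rotation lemma rests on the claim that ``rescaling positive bids leaves the whole allocation distribution unchanged'' for a step mechanism. The step and memoryless properties are defined on the \emph{marginals} $p_i(\Pi_j)$ only, so magnitude-invariance of the joint distribution---or even of the support, which is what you need for $\pi$ to survive the rescaling---does not follow from Theorem~\ref{thm:one} as stated. The paper's proof commits a smaller version of the same leap (it fixes only the probabilities of $2$ for $o_k$ and of $1$ for $o_j$), so this is an imprecision inherited from the paper rather than an error peculiar to you, but you assert it in a strictly stronger form than anything available. On the positive side, your acyclic priority-graph characterization of \myosd\ allocations, with a cycle rotation made strictly improving by rescaling, is a genuinely different and more global route to the support statement than the paper's first-deviation argument with an adjacent transposition, and it even supplies the justification for the step the paper waves at with ``we can show that there is $o_k$\ldots''.
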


\begin{proof}
We start with the ``if'' direction. If a mechanism is ex post equivalent to a probability distribution of  \myosd{\sc \footnotesize s}, then it is strategy-proof and Pareto efficient ex post as each \myosd. We next prove the ``only if'' direction. Consider a strategy-proof and Pareto efficient ex post mechanism and assume that it is not ex post equivalent to any probability distribution of \myosd{\sc \footnotesize s}. Hence, there is an instance, an allocation and $j\in[1,m]$ such that the mechanism and \myosd\ with some priority ordering $\sigma$ agree on $o_1$ to $o_{j-1}$ but the mechanism and any such \myosd\ disagree on $o_j$. WLOG, let the mechanism give $o_j$ to $1$ and \myosd\ with $\sigma$ give $o_j$ to $2$ such that $2$ is immediately before $1$ in $\sigma$. Both agents like item $o_j$. We can show that there is $o_k$ with $k<j$ such that $1$ and $2$ like $o_k$, and that $o_k$ is allocated to agent $2$ with both mechanisms. By Theorem~\ref{thm:one}, with the mechanism, the probabilities of $2$ for $o_k$ and $1$ for $o_j$ do not change for any positive bids of these agents for these items. WLOG, let then $u_{1j}=1$, $u_{1k}=2$, $u_{2j}=2$, $u_{2k}=1$. Hence, the allocation that extends $\pi_{j-1}$ by allocating $o_j$ to agent $1$ is not Pareto efficient ex post.
\qed
\end{proof}

Let us next add the ex ante properties. There is an unbounded number of Pareto efficient ex post and envy-free ex ante (or Pareto efficient ex ante) mechanisms that are not strategy-proof. To see this, consider the mechanism for the instance in Example~\ref{exp:two}, that runs the \myorp\ (or \mymax) mechanism on each other instance. Nevertheless, by Theorems~\ref{thm:three} and~\ref{thm:seven}, the only strategy-proof such mechanism is the \myorp\ mechanism.

\begin{corollary}\label{cor:two}
A non-wasteful mechanism for online fair division is strategy-proof, Pa\-reto efficient ex post and envy-free ex ante iff it is ex post equivalent to the \myorp\ mechanism.
\end{corollary}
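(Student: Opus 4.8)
The plan is to combine the two characterizations already in hand: Theorem~\ref{thm:seven} for the pair (strategy-proof, Pareto efficient ex post) and Theorem~\ref{thm:three} for the pair (strategy-proof, envy-free ex ante), and then to show that the two together leave only \myorp.

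For the ``if'' direction I would argue by inheritance. The \myorp\ mechanism is a memoryless step mechanism, hence strategy-proof; it is a mixture of \myosd\ runs, each of which is Pareto efficient ex post, hence it is Pareto efficient ex post; and it is ex ante equivalent to \mylike, hence envy-free ex ante. Any mechanism returning the same distribution over allocations as \myorp\ on every bid profile therefore shares all three properties, since strategy-proofness is determined by how the output distribution varies with an agent's reports, Pareto efficiency ex post by the support of the distribution, and envy-freeness ex ante only by the induced marginal probabilities. This direction is routine.

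For the ``only if'' direction I would first feed the hypotheses into the two theorems. Being strategy-proof and Pareto efficient ex post, the mechanism is, by Theorem~\ref{thm:seven}, ex post equivalent to a fixed probability distribution $D$ over the \myosd\ mechanisms (equivalently, over priority orders $\sigma$), applied identically on every instance; write $w_\sigma(o)$ for the highest-priority agent in $\sigma$ among those who like $o$, so that a single \myosd\ run sends each item $o$ to $w_\sigma(o)$. Being strategy-proof and envy-free ex ante, the mechanism is, by Theorem~\ref{thm:three}, ex ante equivalent to \mylike. Matching \mylike's marginals on the family of single-item instances in which the lone item is liked exactly by a chosen subset $S\subseteq N$ then forces, for every such $S$ and every $i\in S$, the identity $P_D(\,i \text{ is ranked first in } \sigma \text{ among } S\,)=1/|S|$; that is, under $D$ the top element of every subset is uniformly distributed.

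The remaining and crucial step is to upgrade these per-subset marginal conditions to the conclusion that $D$ induces exactly \myorp's distribution over allocations, i.e.\ that $D$ is the uniform distribution over priority orders. This is the main obstacle, because envy-freeness ex ante is additive, hence a purely \emph{marginal} requirement: it constrains only the probability with which each agent receives each individual item and says nothing directly about the correlations between items that a shared order $\sigma$ creates. I would attack it by observing that the allocation produced on the ``pairwise'' instance -- one item per pair $\{i,k\}$, liked exactly by that pair -- records, for every pair, which agent the realized order ranks first, and hence reconstructs $\sigma$ bijectively; the output distribution on this instance thus equals $D$ transported along a bijection and pins $D$ down uniquely, so it suffices to prove that the uniform-top-of-every-subset conditions admit only the uniform $D$. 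I would try to establish this by induction on prefix length, arguing that uniformity of the top of every subset forces the law of the first ranked agent, then, via the appropriately nested subsets, of the second, and so on, to coincide with those of a uniformly random order. This passage from \emph{all} marginal top-distributions to the full law of $\sigma$ is the delicate part of the argument -- it is exactly where the additive (marginal-only) character of envy-freeness must be supplemented by the order-revealing instances above, and where I would expect the real work to lie; once $D$ is identified with the uniform distribution, ex post equivalence to \myorp\ is immediate.
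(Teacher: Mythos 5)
Your decomposition is exactly the paper's: the paper derives this corollary in a single line from Theorems~\ref{thm:three} and~\ref{thm:seven}, and your ``if'' direction, as well as your reduction of the ``only if'' direction to the condition that, under the mixing distribution $D$ over priority orders, the top agent of every subset $S\subseteq N$ is uniformly distributed on $S$, are both correct and match what the paper implicitly does. You have also correctly isolated the step the paper silently skips. The problem is that the lemma you propose to close it --- that uniform top-of-every-subset marginals force $D$ to be the uniform distribution over orders --- is \emph{false} once $n\geq 4$. This is the classical non-identifiability of a distribution over rankings from its choice probabilities (Block--Marschak; Falmagne): for $n=4$ the distribution $D$ has $23$ free parameters, while the constraints $P_D(i \text{ ranked first in } S)=1/|S|$ for all $i\in S$, $|S|\geq 2$, have rank at most $\sum_{k=2}^{4}\binom{4}{k}(k-1)=17$ after removing the per-subset redundancy, hence at most $18$ including normalization; the solution set is thus an affine subspace of dimension at least $6$ through the uniform distribution, and since the uniform distribution lies in the interior of the simplex, it contains non-uniform distributions $D^{\ast}$ with exactly uniform tops on every subset. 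Your induction on prefix length therefore breaks at the second position: the joint law of the first two ranked agents is simply not a function of the top-of-subset marginals. (For $n\leq 3$ a direct computation shows the marginals do pin down $D$, so your induction succeeds there.)

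Worse, the route cannot be repaired by invoking richer instances: a fixed-$D$ mixture of \myosd\ mechanisms gives item $o_j$ to agent $i$ with marginal probability $P_D(i=\text{top of }S_j)$ on \emph{every} instance, and envy-freeness ex ante is, as you yourself note, a purely marginal (additive) condition, so no instance ever yields constraints on $D$ beyond the top-of-subset ones. Combining this with your own correct observation that the one-item-per-pair instance reconstructs $\sigma$ bijectively, the mixture with such a $D^{\ast}$ is a memoryless step mechanism (hence strategy-proof by Theorem~\ref{thm:one}), Pareto efficient ex post (each realization is an \myosd\ allocation), and ex ante equivalent to \mylike\ (hence envy-free ex ante), yet its ex post distribution on the pairwise instance differs from that of \myorp. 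So the gap you flagged as ``where the real work lies'' is genuine, and for $n\geq 4$ it cannot be filled by any argument: what your strategy can actually deliver --- and all that the combination of Theorems~\ref{thm:three} and~\ref{thm:seven} supports --- is \emph{ex ante} equivalence to \myorp\ (equality of all marginals), not the ex post equivalence claimed in the statement, which your approach establishes only for $n\leq 3$.
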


A mechanism that is Pareto efficient ex post might not be Pareto efficient ex ante. For example, the \myorp\ mechanism is Pareto efficient ex post but not ex ante. To see this, consider the instance in Example~\ref{exp:one}. The reverse direction may also not hold. That is, a mechanism that is Pareto efficient ex ante may not necessarily be Pareto efficient ex post. We show this in Example~\ref{exp:four}.

\begin{example}\label{exp:four}
Consider the mechanism that runs \mymax\ on each instance except on the instance from Example~\ref{exp:one}. In this instance, the mechanism works as follows: agent $1$ gets $o_1$ and $o_2$ with probabilities $1$ and $1-\epsilon$, and agent $2$ gets these items with probabilities 0 and $\epsilon$ where $\epsilon > 0$. With this mechanism, agent $1$ gets expected utility $3 - 2\epsilon$, whilst agent $2$ gets expected utility $\epsilon$. This outcome is Pareto efficient ex ante for any $\epsilon<1/2$. But, there is one returned allocation that gives $o_1$ to agent $1$ and $o_2$ to agent $2$. This outcome is not Pareto efficient ex post. \qed
\end{example}

It is easy to see that the mechanism in Example~\ref{exp:four} is not strategy-proof. Interestingly, we can give a complete characterization of mechanisms that are strategy-proof, Pareto efficient ex post and Pareto efficient ex ante.

\begin{theorem}\label{thm:eight}
A non-wasteful mechanism for online fair division is strategy-proof, Pa\-reto efficient ex post and ex ante iff it is ex post equivalent to the \myosd\ mechanism.
\end{theorem}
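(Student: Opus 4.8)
The plan is to build on Theorem~\ref{thm:seven}, which already pins down strategy-proof and PEP mechanisms as ex post equivalent to a probability distribution $\lambda$ over \myosd\ mechanisms (one \myosd\ per priority order $\sigma$). It then remains to show that adding Pareto efficiency ex ante forces this distribution to collapse onto a single order, and conversely that a single \myosd\ already satisfies all three properties. The two halves I would establish are: (i) every \myosd\ is PEA, and (ii) any non-degenerate mixture of \myosd{\sc\footnotesize s} fails PEA.

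For the ``if'' direction, suppose a mechanism is ex post equivalent to \myosd\ with a fixed order, WLOG $\sigma=(1,\ldots,n)$. Ex post equivalence transfers strategy-proofness (the mechanism has the same memoryless step $p_i(\Pi_j)$ as \myosd, so Theorem~\ref{thm:one} applies) and PEP (it returns exactly the allocations of \myosd, which are PEP). The only real content is that \myosd\ is PEA, which I would prove by induction along $\sigma$. Agent $1$ receives under \myosd\ the sum of all their positive utilities, i.e.\ the maximum expected utility any mechanism can give agent $1$; hence any ex ante Pareto improvement must keep agent $1$ at this value, which by additivity and non-negativity forces every item agent $1$ likes to go to agent $1$ with probability $1$. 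Removing these items and repeating the argument for agent $2$ on the items agent $1$ dislikes, then agent $3$, and so on, shows that any weakly improving mechanism must reproduce the \myosd\ allocation exactly, leaving no room for a strict gain. Thus \myosd\ is PEA, and ex post equivalence carries this to the mechanism.

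For the ``only if'' direction, let the mechanism be strategy-proof, PEP and PEA. By Theorem~\ref{thm:seven} it is ex post equivalent to a distribution $\lambda$ over \myosd{\sc\footnotesize s}. Assume $\lambda$ is not concentrated on one order; then its support contains two distinct orders, which differ by an inversion, so there are agents $a\neq b$ with $q:=\Pr_\lambda[a\text{ precedes }b]\in(0,1)$. I would then exhibit a two-item instance witnessing a failure of PEA: let only $a$ and $b$ bid positively for $o_1,o_2$, with $u_{a1}=u_{b2}=1$ and $u_{a2}=u_{b1}=\epsilon$ for a small $\epsilon>0$. In every order the earlier of $a,b$ is the first agent liking both items, so \myosd\ gives both items to that agent; hence the mechanism's expected utilities for $(a,b)$ are $(q(1+\epsilon),(1-q)(1+\epsilon))$. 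The deterministic split that gives $o_1$ to $a$ and $o_2$ to $b$ yields $(1,1)$, which strictly dominates this vector once $\epsilon<\min\{(1-q)/q,\,q/(1-q)\}$, while all other agents are unaffected. This contradicts PEA, so $\lambda$ must be a point mass on some $\sigma^{*}$, i.e.\ the mechanism is ex post equivalent to \myosd.

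The step I expect to be the crux is establishing that \myosd\ is genuinely PEA and not merely PEP: Pareto efficiency ex post does not imply ex ante (indeed \myorp\ is PEP but not PEA), so the argument must use the special structure of serial dictatorship---that the current dictator attains their global maximum on the remaining items---rather than any generic convexity. The dual obstacle in the ``only if'' direction is precisely that a mixture of two PEP (even PEA) serial dictatorships lands in the interior of the achievable expected-utility region; the two-item instance is designed so that this interior point is visibly dominated by a fair split for \emph{every} $q\in(0,1)$, which is why the utilities must be chosen as a function of $q$ rather than fixed as in Example~\ref{exp:one}.
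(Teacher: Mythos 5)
Your proof is correct, and while it shares the paper's skeleton---both proofs lean on Theorem~\ref{thm:seven} to reduce the ``only if'' direction to a probability distribution over \myosd\ mechanisms---the execution differs in two substantive ways. First, for the ``if'' direction the paper simply asserts that \myosd\ is Pareto efficient ex ante, whereas you actually prove it by induction along $\sigma$ (each dictator in turn attains the maximum expected utility achievable subject to the earlier dictators' constraints, so any weakly improving random assignment must reproduce the \myosd\ outcome exactly, leaving no room for a strict gain); this fills a step the paper leaves implicit. Second, your collapse argument is structured differently: the paper argues \emph{positionally}, showing that two agents cannot both have positive probability $p_1,p_2$ of the top priority via an $m$-item instance with a large utility $u>\max\lbrace p_1/p_2,\, p_2/p_1\rbrace$, and then recurses down the priority positions by induction, while you observe that any non-degenerate mixture must contain some pair $a,b$ with fractional precedence probability $q\in(0,1)$ and kill it in one shot with a two-item instance and small $\epsilon<\min\lbrace (1-q)/q,\, q/(1-q)\rbrace$, needing no induction at all. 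Both arguments rest on the same essential trick of tuning utilities to the mixing probabilities (the paper's large $u$ plays exactly the role of your small $\epsilon$), and your pairwise-inversion reduction is valid since two distinct orders in the support must disagree on some pair. One caveat applies equally to both proofs: they read Theorem~\ref{thm:seven} as supplying a \emph{single} distribution over priority orders valid across all instances, so that the probability extracted from the distribution ($q$, respectively $p_1,p_2$) is the one governing the constructed witness instance; since the paper's own proof makes the same implicit assumption, this is not a defect of your argument relative to the paper. On balance your version is more self-contained (it proves the PEA claim for \myosd\ rather than asserting it) and avoids the induction, whereas the paper's positional induction mirrors the structure of the distribution itself.
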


\begin{proof}
We show the ``if'' direction. The mechanism returns the same allocation as \myosd. Hence, it is strategy-proof, Pareto efficient ex post and Pareto efficient ex ante. We next show the ``only if'' direction. By Theorem~\ref{thm:seven}, the mechanism is a probability distribution of \myosd{\sc \footnotesize s}. Suppose that there are at least two different allocations which are the result of different \myosd{\sc \footnotesize s} in this distribution. WLOG, assume that agent $1$ have the highest priority with probability $p_1\in (0,1)$, agent $2$ with $p_2\in (0,1-p_1]$ and agent $k\in N\setminus\lbrace 1,2\rbrace$ with $p_k\in[0,1-p_1-p_2]$. Suppose that agent $i\in\lbrace 1,2\rbrace$ likes all items with 1 except $o_i$ which they like with $u$, and agent $k\in N\setminus\lbrace 1,2\rbrace$ likes items positively. The expected utility of agent $i\in\lbrace 1,2\rbrace$ is $p_i\cdot (n-1+u)$ and the one of agent $k\in N\setminus\lbrace 1,2\rbrace$ is $p_k$ multiplied by the sum of their utilities. Consider now another distribution of allocations, in which agent $i\in\lbrace 1,2\rbrace$ gets $p_i$ for each item they like with 1 except items $o_1$, $o_2$, $p_1+p_2$ for item $o_i$ and $0$ for $o\in\lbrace o_1,o_2\rbrace\setminus\lbrace o_i\rbrace$ whereas agent $k\in N\setminus\lbrace 1,2\rbrace$ gets $p_k$ for each item. This allocation Pareto improves the allocation of the mechanism for $u>\max\lbrace (p_1/p_2),(p_2/p_1)\rbrace$. Hence, the mechanism is not Pareto efficient ex ante. Therefore, $p_1$ and $p_2$ cannot be both positive and, for this reason, each mechanism in the distribution gives the highest priority to the same agent. We can inductively show this for each priority.\qed
\end{proof}

We next observe one last difference to the offline setting where stochastic Pareto efficiency and envy-freeness are always possible \cite{bogomolnaia2001}. In online fair division, \emph{no} mechanism (even wasteful) satisfies Pareto efficiency ex ante and envy-freeness ex ante unless we consider simple 0/1 utilities (e.g.\ the \myblike\ mechanism).

\begin{theorem}\label{thm:nine}
With general cardinal utilities, \emph{no} mechanism for online fair division is envy-free ex ante and Pareto efficient ex ante. 
\end{theorem}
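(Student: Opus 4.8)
The plan is to prove this universal impossibility by exhibiting a single instance with strictly positive utilities on which envy-freeness ex ante forces an allocation that is demonstrably not Pareto efficient ex ante. Since the claim is that \emph{no} mechanism can have both properties, one well-chosen instance suffices: I take an arbitrary mechanism, assume it is envy-free ex ante, and show it must fail Pareto efficiency ex ante on that instance.

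First I would fix the two-agent, two-item instance of Example~\ref{exp:one}, with $u_{11}=1$, $u_{12}=2$, $u_{21}=2$, $u_{22}=1$. All utilities are strictly positive and the agents' ``comparative-advantage'' ratios differ, since $u_{11}/u_{21}=1/2\neq 2=u_{12}/u_{22}$. Suppose some mechanism is envy-free ex ante. Because every utility is non-zero, Theorem~\ref{thm:four} applies and tells us the mechanism is ex ante equivalent to \mylike\ on this instance. As both agents like both items, \mylike\ gives each agent each item with marginal probability $1/2$; hence $p_1(\Pi_2)=p_2(\Pi_2)=1/2$ for both items, and the induced expected utilities are $\overline{u}_1(\Pi_2)=(u_{11}+u_{12})/2=3/2$ and $\overline{u}_2(\Pi_2)=(u_{21}+u_{22})/2=3/2$.

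Next I would Pareto-dominate this outcome ex ante by naming an explicit competitor. Consider the deterministic mechanism that allocates $o_1$ to agent $2$ and $o_2$ to agent $1$, i.e.\ the allocation $\pi^4$ of Example~\ref{exp:one}. This is a legitimate, non-wasteful online mechanism, since each item is handed to an agent who bids positively for it with no reference to future items. It yields expected utilities $\overline{u}_1=u_{12}=2>3/2$ and $\overline{u}_2=u_{21}=2>3/2$, strictly more for \emph{both} agents. By the definition of Pareto efficiency ex ante, the envy-free ex ante mechanism is therefore not Pareto efficient ex ante on this instance. As the chosen mechanism was an arbitrary envy-free ex ante mechanism, no mechanism can be both envy-free ex ante and Pareto efficient ex ante.

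The only place the argument could slip is in treating the competing allocation correctly: Pareto efficiency ex ante compares the evaluated mechanism against \emph{any} other mechanism on the same instance, so I never need to characterize the full set of achievable expected-utility vectors (namely, all marginal matrices with column sums one, realizable by randomizing each item independently). It is enough that $\pi^4$ is itself a valid online mechanism, which is immediate, and that the half-half split is strictly dominated, which holds precisely because the two agents' utility ratios differ. I would close by remarking that the restriction to general (non-$0/1$) utilities is essential, in line with the parenthetical \myblike\ remark: with $0/1$ utilities every returned allocation already has utilitarian welfare $m$, so this tension between the two properties vanishes.
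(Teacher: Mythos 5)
Your proof is correct and takes essentially the same approach as the paper: both fix the instance of Example~\ref{exp:one}, invoke Theorem~\ref{thm:four} to force each item to be allocated with probability $1/2$ to each agent (hence expected utility $3/2$ apiece), and exhibit the swap allocation $\pi^4$ (each agent receiving the item valued at $2$) as an ex ante Pareto improvement. The only cosmetic difference is that the paper applies Theorem~\ref{thm:four} round by round ($o_1$ first, then $o_1$ and $o_2$) while you invoke its conclusion wholesale, which amounts to the same use of the result.
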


\begin{proof}
Consider an envy-free ex ante mechanism and the instance with non-zero utilities in Example~\ref{exp:one}. By Theorem~\ref{thm:four}, to ensure envy-freeness ex ante for $o_1$, the mechanism should give it to each agent with $1/2$. By Theorem~\ref{thm:four}, to ensure envy-freeness for both $o_1$ and $o_2$, the mechanism then should give $o_2$ to each agent with $1/2$. The expected utility of each agent is $3/2$. This expected allocation is Pareto dominated by the allocation in which each agent gets the item they value with \num{2}. Hence, the mechanism is not Pareto efficient ex ante.\qed
\end{proof}

To sum up, we might use the \myorp\ or \mypar\ mechanism for Pareto efficiency ex post, or the \mymax\ or \myosd\ mechanism for Pareto efficiency ex ante. With 0/1 utilities, we may also use the \mylike\ or \myblike\ mechanism.

\section{Conclusions}\label{sec:con}

We summarize all results in Table~\ref{tab:results} and Figure~\ref{fig:results}. For completeness, we add some simple results for the case of identical utilities when the \mypar\ and \mymax\ mechanisms become ex post equivalent to the \mylike\ mechanism, the \myblike\ mechanism becomes ex ante equivalent to the \mylike\ mechanism, and each of these becomes Pareto efficient as the sum of agents' utilities is a constant in each allocation.

\vspace{-0.5cm}
\begin{table}[h]
\centering
\caption{Axiomatic results. Key: $\star$ - the result follows from [Aleksandrov \emph{et al}., 2015].}

\resizebox{0.75\textwidth}{!}{
\begin{tabular}{|c|C|C|C|C|C|C|C|C|C|}
\hline
\multirow{2}{*}{\Large {\bf mechanism}} & \Large SP & \Large OSP & \Large EFA & \Large SEFA & \Large EFP & \Large SEFP & \Large BEFP & \Large PEA & \Large PEP \\ \cline{2-10}
& \multicolumn{9}{c|}{\Large general cardinal utilities} \\ \hline

\Large {\sc Online RP} & \Large $\checkmark$ & \Large $\checkmark$ & \Large $\checkmark$ & \Large $\checkmark$ & \Large $\times$ & \Large $\times$ & \Large $\times$ & \Large $\times$ & \Large $\checkmark$ \\ \hline
\Large {\sc Online SD} & \Large $\checkmark$ & \Large $\checkmark$ & \Large $\times$ & \Large $\times$ & \Large $\times$ & \Large $\times$ & \Large $\times$ & \Large $\checkmark$ & \Large $\checkmark$ \\ \hline
\Large {\sc Maximum Like} & \Large $\times$ & \Large $\times$ & \Large $\times$ & \Large $\times$ & \Large $\times$ & \Large $\times$ & \Large $\times$ & \Large $\checkmark$ & \Large $\checkmark$ \\ \hline
\Large {\sc Pareto Like} & \Large $\times$ & \Large $\times$ & \Large $\times$ & \Large $\times$ & \Large $\times$ & \Large $\times$ & \Large $\times$ & \Large $\times$ & \Large $\checkmark$ \\ \hline
\Large {\sc Like} & \Large $\checkmark^{\star}$ &  \Large $\checkmark$ & \Large $\checkmark^{\star}$ & \Large $\checkmark$ & \Large $\times^{\star}$ & \Large $\times$ & \Large $\times^{\star}$ & \Large $\times$ & \Large $\times$ \\ \hline
\Large {\sc Balanced Like} & \Large $\times^{\star}$ & \Large $\checkmark$ & \Large $\times^{\star}$ & \Large $\times$ & \Large $\times^{\star}$ & \Large $\times$ & \Large $\times^{\star}$ & \Large $\times$ & \Large $\times$ \\ \hline

& \multicolumn{9}{c|}{\Large identical cardinal utilities} \\ \hline
\Large {\sc Like} & \Large $\checkmark^{\star}$ & \Large $\checkmark$ & \Large $\checkmark^{\star}$ & \Large $\checkmark$ & \Large $\times^{\star}$ & \Large $\times$ & \Large $\times$ & \Large $\checkmark$ & \Large $\checkmark$ \\ \hline
\Large {\sc Balanced Like} & \Large $\times$ & \Large $\checkmark$ & \Large $\checkmark$ & \Large $\checkmark$ & \Large $\times^{\star}$ & \Large $\times$ & \Large $\times$ & \Large $\checkmark$ & \Large $\checkmark$ \\ \hline

& \multicolumn{9}{c|}{\Large binary cardinal utilities} \\ \hline
\Large {\sc Like} & \Large $\checkmark^{\star}$ & \Large $\checkmark$ & \Large $\checkmark^{\star}$ & \Large $\checkmark$ & \Large $\times^{\star}$ & \Large $\times$ & \Large $\times^{\star}$ & \Large $\checkmark$ & \Large $\checkmark$ \\ \hline
\Large {\sc Balanced Like} & \Large $\times^{\star}$ & \Large $\checkmark$ & \Large $\checkmark^{\star}$ & \Large $\times$ & \Large $\times^{\star}$ & \Large $\times$ & \Large $\checkmark^{\star}$ & \Large $\checkmark$ & \Large $\checkmark$ \\ \hline
\end{tabular}
}
\label{tab:results}
\end{table}
\vspace{-1.45cm}
\begin{figure}[h]
\centering
\resizebox{0.75\columnwidth}{4.5cm}{
\includegraphics{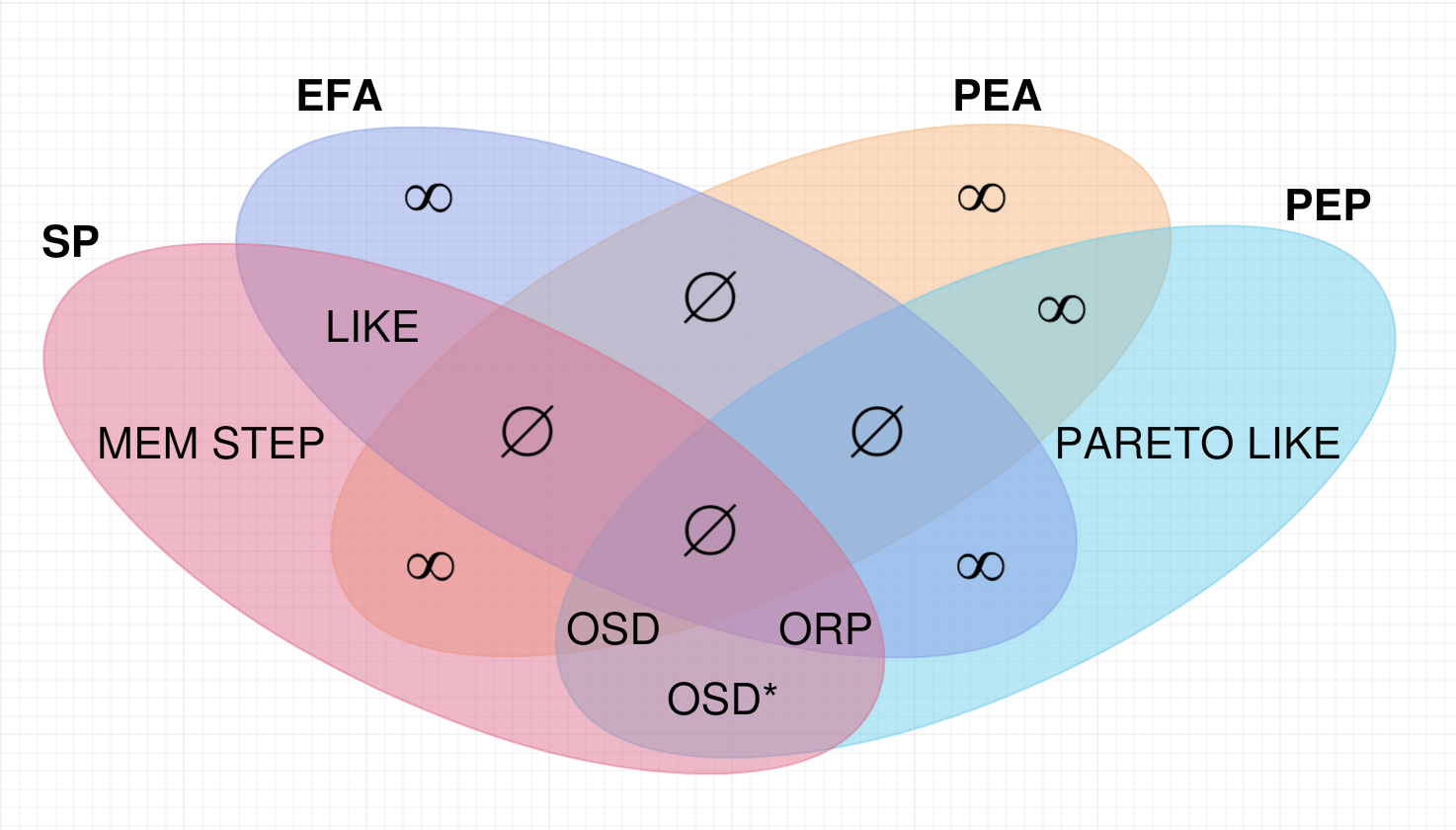}
}
\caption{General characterization results. Key: $\emptyset$ - no mechanisms, $\infty$ - inf. many mechanisms.}
\label{fig:results}
\end{figure}
\vspace{-0.5cm}

In future work, we will add quotas to our setting. And, we will extend our results to approximations of envy-freeness and general monotone utilities. 

\bibliographystyle{splncs}
\bibliography{online}

\begin{thebibliography}{10}
\providecommand{\url}[1]{\texttt{#1}}
\providecommand{\urlprefix}{URL }

\bibitem{abdulkadiroglu1998}
Abdulkadiroglu, A., S\"{o}nmez, T.: Random serial dictatorship and the core
  from random endowments in house allocation problems. Econometrica  66(3),
  689--702 (1998)

\bibitem{aleksandrov2015ijcai}
Aleksandrov, M., Aziz, H., Gaspers, S., Walsh, T.: Online fair division:
  analysing a food bank problem. In: Proceedings of the Twenty-Fourth
  International Joint Conference on Artificial Intelligence ({IJCAI} 2015),
  Buenos Aires, Argentina, July 25-31, 2015. pp. 2540--2546 (2015)

\bibitem{aleksandrov2017mcm}
Aleksandrov, M., Walsh, T.: Most competitive mechanisms in online fair
  division. In: {KI} 2017: Advances in Artificial Intelligence. pp. 44--57
  (2017)

\bibitem{benade2018}
Benade, G., Kazachkov, A.M., Procaccia, A.D., Psomas, C.A.: How to make envy
  vanish over time. In: Proceedings of the 2018 ACM Conference on Economics and
  Computation. pp. 593--610. EC '18, ACM, New York, NY, USA (2018)

\bibitem{bogomolnaia2001}
Bogomolnaia, A., Moulin, H.: A new solution to the random assignment problem.
  Journal of Economic Theory  100(2),  295--328 (2001)

\bibitem{brams2005}
Brams, S.J., King, D.L.: Efficient fair division: Help the worst off or avoid
  envy? Rationality and Society  17(4),  387--421 (2005)

\bibitem{chevaleyre2008}
Chevaleyre, Y., Endriss, U., Estivie, S., Maudet, N.: Multiagent resource
  allocation in k-additive domains: preference representation and complexity.
  Annals of Operations Research  163(1),  49--62 (Oct 2008)

\bibitem{dickerson2012}
Dickerson, J.P., Procaccia, A.D., Sandholm, T.: Dynamic matching via weighted
  myopia with application to kidney exchange. In: Proceedings of the
  Twenty-Sixth {AAAI} Conference on Artificial Intelligence (2012)

\bibitem{dickerson2015}
Dickerson, J.P., Sandholm, T.: Futurematch: Combining human value judgments and
  machine learning to match in dynamic environments. In: Proceedings of the
  Twenty-Ninth {AAAI} Conference. pp. 622--628 (2015)

\bibitem{freeman2018}
Freeman, R., Zahedi, S.M., Conitzer, V., Lee, B.C.: Dynamic proportional
  sharing: A game-theoretic approach. Proceedings of the ACM on Measurement and
  Analysis of Computing Systems - SIGMETRICS  2(1),  3:1--3:36 (Apr 2018)

\bibitem{gibbard1973}
Gibbard, A.: Manipulation of voting schemes: A general result. Econometrica
  41(4),  587--601 (1973)

\bibitem{kash2014}
Kash, I.A., Procaccia, A.D., Shah, N.: No agent left behind: Dynamic fair
  division of multiple resources. JAIR  51,  579--603 (2014),
  \url{https://doi.org/10.1613/jair.4405}

\bibitem{lian2018}
Lian, J.W., Mattei, N., Noble, R., Walsh, T.: The conference paper assignment
  problem: Using order weighted averages to assign indivisible goods. In:
  Proceedings of the Thirty-Second {AAAI} Conference on Artificial
  Intelligence, (AAAI-18), New Orleans, Louisiana, USA, February 2-7, 2018. pp.
  1138--1145 (2018)

\bibitem{manea2007}
Manea, M.: Serial dictatorship and {P}areto optimality. Games and Economic
  Behavior  61(2),  316--330 (2007),
  \url{https://doi.org/10.1016/j.geb.2007.01.003}

\bibitem{svensson1999}
Svensson, L.G.: Strategy-proof allocation of indivisible goods. Social Choice
  and Welfare  16(4),  557--567 (1999)

\bibitem{walsh2011}
Walsh, T.: Online cake cutting. In: Proceedings of 2nd {ADT} International
  Conference, Piscataway, New Jersey, USA, October 26-28, 2011. pp. 292--305
  (2011)

\bibitem{walsh2014}
Walsh, T.: Allocation in practice. In: Proceedings of the Thirty-Seventh {KI}
  2014, Stuttgart, Germany, September 22-26, 2014. pp. 13--24 (2014)

\bibitem{xia2010}
Xia, L., Conitzer, V.: Strategy-proof voting rules over multi-issue domains
  with restricted preferences. In: Internet and Network Economics - 6th
  International Workshop, {WINE} 2010. pp. 402--414 (2010)

\bibitem{zhou1990}
Zhou, L.: On a conjecture by {G}ale about one-sided matching problems. Journal
  of Economic Theory  52(1),  123--135 (1990)

\end{thebibliography}

\end{document}